\documentclass[acmsmall, nonacm, anonymous = false, screen]{acmart}
\usepackage{amsthm,latexsym,stmaryrd}
\usepackage{dsfont}
\usepackage{thmtools,thm-restate}

\newcommand{\grammar}{\ensuremath{\Lambda}}
\newcommand{\dee}{\mathrm{d}}

\title{Behavioural pseudometrics for continuous-time diffusions}
\author{Linan Chen}
\affiliation{
\institution{McGill University}
\department{Department of Mathematics and Statistics}
\city{Montreal}
\country{Canada}
}
\author{Florence Clerc}
\affiliation{
\institution{McGill University}
\department{School of Computer Science}
\city{Montreal}
\country{Canada}
}
\author{Prakash Panangaden}
\affiliation{
\institution{McGill University}
\department{School of Computer Science}
\city{Montreal}
\country{Canada}
}
\affiliation{
\institution{MILA}
\city{Montreal}
\country{Canada}
}

\makeatletter
\newcommand{\singlespacing}{\let\CS=\@currsize\renewcommand{\baselinestretch}{1}\small\CS}
\newcommand{\doublespacing}{\let\CS=\@currsize\renewcommand{\baselinestretch}{1.75}\small\CS}
\newcommand{\normalspacing}{\let\CS=\@currsize\renewcommand{\baselinestretch}{\BLS}\small\CS}
\makeatother

%
\theoremstyle{definition}
\newtheorem{thm}{Theorem}[section]

\newtheorem{cor}[thm]{Corollary}

\newtheorem{prop}[thm]{Proposition}

\newtheorem{remark}[thm]{Remark}
 
\newtheorem{theorem}[thm]{Theorem}

\newtheorem{lemma}[thm]{Lemma}
\newtheorem{proposition}[thm]{Proposition}

\newtheorem{definition}[thm]{Definition}
\newtheorem{example}[thm]{Example}

\newtheorem*{cor*}{Corollary}





\newcommand{\x}{\times}










 \def\pushright#1{{
    \parfillskip=0pt            
    \widowpenalty=10000         
    \displaywidowpenalty=10000  
    \finalhyphendemerits=0      
    \leavevmode                 
    \unskip                     
    \nobreak                    
    \hfil                       
    \penalty50                  
    \hskip.2em                  
    \null                       
    \hfill                      
    {#1}                        
    \par}}                      
 \def\qed{\pushright{\rule{2mm}{3mm}}\penalty-700 \smallskip}


\renewenvironment{proof}{\begin{trivlist} \item[{\bf ~Proof}.]}%
 {\qed\end{trivlist}}


\makeatletter


\newdimen\w@dth

\def\setw@dth#1#2{\setbox\z@\hbox{\scriptsize $#1$}\w@dth=\wd\z@
\setbox\@ne\hbox{\scriptsize $#2$}\ifnum\w@dth<\wd\@ne \w@dth=\wd\@ne \fi
\advance\w@dth by 1.2em}

\def\t@^#1_#2{\allowbreak\def\n@one{#1}\def\n@two{#2}\mathrel
{\setw@dth{#1}{#2}
\mathop{\hbox to \w@dth{\rightarrowfill}}\limits
\ifx\n@one\empty\else ^{\box\z@}\fi
\ifx\n@two\empty\else _{\box\@ne}\fi}}
\def\t@@^#1{\@ifnextchar_ {\t@^{#1}}{\t@^{#1}_{}}}

\def\t@left^#1_#2{\def\n@one{#1}\def\n@two{#2}\mathrel{\setw@dth{#1}{#2}
\mathop{\hbox to \w@dth{\leftarrowfill}}\limits
\ifx\n@one\empty\else ^{\box\z@}\fi
\ifx\n@two\empty\else _{\box\@ne}\fi}}
\def\t@@left^#1{\@ifnextchar_ {\t@left^{#1}}{\t@left^{#1}_{}}}

\def\two@^#1_#2{\def\n@one{#1}\def\n@two{#2}\mathrel{\setw@dth{#1}{#2}
\mathop{\vcenter{\hbox to \w@dth{\rightarrowfill}\kern-1.7ex
                 \hbox to \w@dth{\rightarrowfill}}%
       }\limits
\ifx\n@one\empty\else ^{\box\z@}\fi
\ifx\n@two\empty\else _{\box\@ne}\fi}}
\def\tw@@^#1{\@ifnextchar_ {\two@^{#1}}{\two@^{#1}_{}}}

\def\tofr@^#1_#2{\def\n@one{#1}\def\n@two{#2}\mathrel{\setw@dth{#1}{#2}
\mathop{\vcenter{\hbox to \w@dth{\rightarrowfill}\kern-1.7ex
                 \hbox to \w@dth{\leftarrowfill}}%
       }\limits
\ifx\n@one\empty\else ^{\box\z@}\fi
\ifx\n@two\empty\else _{\box\@ne}\fi}}
\def\t@fr@^#1{\@ifnextchar_ {\tofr@^{#1}}{\tofr@^{#1}_{}}}


\newdimen\W@dth
\def\setW@dth#1#2{\setbox\z@\hbox{$#1$}\W@dth=\wd\z@
\setbox\@ne\hbox{$#2$}\ifnum\W@dth<\wd\@ne \W@dth=\wd\@ne \fi
\advance\W@dth by 1.2em}

\def\T@^#1_#2{\allowbreak\def\N@one{#1}\def\N@two{#2}\mathrel
{\setW@dth{#1}{#2}
\mathop{\hbox to \W@dth{\rightarrowfill}}\limits
\ifx\N@one\empty\else ^{\box\z@}\fi
\ifx\N@two\empty\else _{\box\@ne}\fi}}
\def\T@@^#1{\@ifnextchar_ {\T@^{#1}}{\T@^{#1}_{}}}

\def\T@left^#1_#2{\def\N@one{#1}\def\N@two{#2}\mathrel{\setW@dth{#1}{#2}
\mathop{\hbox to \W@dth{\leftarrowfill}}\limits
\ifx\N@one\empty\else ^{\box\z@}\fi
\ifx\N@two\empty\else _{\box\@ne}\fi}}
\def\T@@left^#1{\@ifnextchar_ {\T@left^{#1}}{\T@left^{#1}_{}}}

\def\Tofr@^#1_#2{\def\N@one{#1}\def\N@two{#2}\mathrel{\setW@dth{#1}{#2}
\mathop{\vcenter{\hbox to \W@dth{\rightarrowfill}\kern-1.7ex
                 \hbox to \W@dth{\leftarrowfill}}%
       }\limits
\ifx\N@one\empty\else ^{\box\z@}\fi
\ifx\N@two\empty\else _{\box\@ne}\fi}}
\def\T@fr@^#1{\@ifnextchar_ {\Tofr@^{#1}}{\Tofr@^{#1}_{}}}

\def\Two@^#1_#2{\def\N@one{#1}\def\N@two{#2}\mathrel{\setW@dth{#1}{#2}
\mathop{\vcenter{\hbox to \W@dth{\rightarrowfill}\kern-1.7ex
                 \hbox to \W@dth{\rightarrowfill}}%
       }\limits
\ifx\N@one\empty\else ^{\box\z@}\fi
\ifx\N@two\empty\else _{\box\@ne}\fi}}
\def\Tw@@^#1{\@ifnextchar_ {\Two@^{#1}}{\Two@^{#1}_{}}}

\def\to{\@ifnextchar^ {\t@@}{\t@@^{}}}
\def\from{\@ifnextchar^ {\t@@left}{\t@@left^{}}}
\def\two{\@ifnextchar^ {\tw@@}{\tw@@^{}}}
\def\tofro{\@ifnextchar^ {\t@fr@}{\t@fr@^{}}}
\def\To{\@ifnextchar^ {\T@@}{\T@@^{}}}
\def\From{\@ifnextchar^ {\T@@left}{\T@@left^{}}}
\def\Two{\@ifnextchar^ {\Tw@@}{\Tw@@^{}}}
\def\Tofro{\@ifnextchar^ {\T@fr@}{\T@fr@^{}}}

\makeatother

\newcommand{\cA}{\mathcal{A}}

\newcommand{\cC}{\mathcal{C}}

\newcommand{\cE}{\mathcal{E}}
\newcommand{\cF}{\mathcal{F}}
\newcommand{\cG}{\mathcal{G}}
\newcommand{\cH}{\mathcal{H}}
\newcommand{\cI}{\mathcal{I}}

\newcommand{\cL}{\mathcal{L}}
\newcommand{\cM}{\mathcal{M}}

\newcommand{\cO}{\mathcal{O}}
\newcommand{\cP}{\mathcal{P}}

\newcommand{\cT}{\mathcal{T}}

\newcommand{\cX}{\mathcal{X}}

\begin{document}

\begin{abstract}
   Bisimulation is a concept that captures behavioural equivalence of states in a variety
  of types of transition systems.  It has been widely studied in a discrete-time setting
  where the notion of a step is fundamental.  In our setting we are considering  ``flow''-processes emphasizing that they 
 evolve in continuous time.  In such continuous-time settings, the
  concepts are not straightforward adaptations of their
  discrete-time analogues and we restrict our study to  diffusions that do not lose mass over time and with additional regularity constraints.
  
  In \cite{Chen19a, Chen20}, Chen et al. proposed different definitions of
  behavioural equivalences for continuous-time stochastic processes where the evolution is
  a \emph{flow} through time.  That work only addressed equivalen\-ces. 
  In this work, we aim at quantifying how differently processes behave. We present two pseudometrics for diffusion-like processes.
  These pseudometrics are 
  fixpoints of two different functionals on the space of 1-bounded pseudometrics on the state space.  We
  also characterize these pseudometrics in terms of real-valued modal logics; this is a
  quantitative analogue of the notion of logical characte\-rization of bisimulation. These real-valued modal logics indicate that the two pseudometrics are different and thus yield different notions of behavioural equivalence.
\end{abstract}

\maketitle

\section{Introduction}

Bisimulation~\cite{Milner80,Park81,Sangiorgi09} is a fundamental concept in the theory of
transition systems capturing a strong notion of behavioural equivalence.  The extension to
probabilistic systems is due to Larsen and Skou~\cite{Larsen91}; henceforth we will simply
say ``bisimulation'' instead of ``probabilistic bisimulation''.  Bisimulation has been
studied for discrete-time systems where transitions happen as steps, both on
discrete~\cite{Larsen91} and continuous state
spaces~\cite{Blute97,Desharnais98,Desharnais02}.  In all these types of systems, a crucial
ingredient of the definition of bisimulation is the ability to talk about \emph{the next
  step}.  This notion of bisimulation is characterized by a modal logic~\cite{Larsen91}
even when the state space is continuous~\cite{Desharnais98}.

Some work had previously been done in what are called continuous-time systems, see for
example~\cite{Baier08}, but even in so-called continuous-time Markov chains there is a
discrete notion of time \emph{step}; it is only that there is a real-valued duration
associated with each state that leads to people calling such systems continuous-time.
They are often called ``jump processes'' in the mathematical literature (see, for example,
\cite{Rogers00a,Whitt02}), a phrase that better captures the true nature of such
processes.  Metrics and equivalences for such processes were studied by Gupta et
al.~\cite{Gupta04,Gupta06} but the results therein do not address the subtlety in our problem (see Remark \ref{rem:subtlety}).

There is a vast range of systems that involve true continuous-time evolution:
deterministic systems governed by differential equations and stochastic systems governed
by ``noisy'' differential equations called stochastic differential equations.  These have
been extensively studied for over a century since the pioneering work of
Einstein~\cite{Einstein1905} on Brownian motion.  In the physics literature such systems
are often studied through partial differential equations governing the evolution of the
probability density function; the equations are typically known as the Kolmogorov backward equation or the Fokker-Planck equation.

The processes we consider have continuous state spaces and are governed by a continuous
time evolution, a paradigmatic example is Brownian motion.  We will use the adjective
``flow'' to emphasize the distinction with jump processes.  A natural, almost naive, idea is to view a continuous-time process being ``approximated'' by a discrete-time process and then
one can take, in some proper sense,  a limit as the step size goes to zero.  However, entirely new
phenomena and difficulties manifest themselves in this procedure.  For example, even the basic properties of trajectories of Brownian motion are vastly more complicated than the counterparts of a random walk.  Basic concepts like ``the time at which a process exits a given
subset of the state space'' becomes intricate to define.  Notions like ``matching transition
steps'' are no longer applicable as the notion of ``step'' does not make sense. 

In~\cite{Chen19a, Chen20, Chen23}, Chen et al. proposed different notions of behavioural
equivalences on continuous-time processes.
They showed that there were several possible extensions of the notion of bisimulation to continuous time and that the continuous-time notions needed to involve trajectories in order to be meaningful.  There were significant mathematical challenges in even proving that an equivalence relation existed.  For example,  obstacles occurred in establishing
measurability of various functions and sets, due to the
inability to countably generate the relevant $\sigma$-algebras.  Those papers
left completely open the question of defining a suitable pseudometric analogue, a concept
that would be more useful in practice than an equivalence relation.

Previous work on discrete-time Markov processes by Desharnais et al.~\cite{Desharnais99b, Desharnais04}
extended the modal logic characterizing bisimulation to a real-valued logic that allowed
to not only state if two states were ``behaviourally equivalent'' but, more interestingly, how similarly they behaved.  This shifts the notion from a qualitative notion (an
equivalence) to a quantitative one (a pseudometric).

Other work also on discrete-time Markov processes by van Breugel et
al.~\cite{vanBreugel05a} introduced a slightly different real-valued logic and compared
the corresponding pseudometric to another pseudometric obtained as a terminal coalgebra of a
carefully crafted functor.

Our goal is to quantify the difference between behaviours of \emph{flow} (continuous-time) processes.  The behaviour of a process is given by the \emph{dynamics} of the system.  For continuous-time systems, there are two ways of describing the dynamics: either through a time-indexed  family of  Markov kernels or through a space-indexed family of probability distributions on the space of trajectories.  This leads to  two different pseudometrics each defined in two different ways.
\begin{enumerate}
\item The first way of defining a pseudometric is through a functional.
More
specifically,  given a pseudometric, we view it as some cost function and define the functional as 
the optimal transport cost between the distributions representing the dynamics of the system.  Then, the desired behavioural pseudometric is a fixpoint of the functional (though it is not defined directly in that way).
\item We then show that this pseudometric is characterized by a real-valued logic that closely resembles the one introduced in
\cite{vanBreugel05a} for discrete-time systems:  this is a quantitative analogue of the logical characterization of
bisimulation.
\end{enumerate}


Generally speaking, we are following a familiar path from equiva\-lences to
logics to metrics.  However, it is necessary for us to redevelop the framework and the mathematical
techniques from scratch.  Compared with the step-based ``jump processes'', these ``flow
processes'' are intrinsically different and have more involved properties, which poses new
challenges. For example, unlike in the discrete cases, certain functionals on
trajectories that we are interested in studying may not even be measurable. To overcome
such difficulties, we will restrict to diffusions with additional regularity conditions.
This restricts the
scope of processes considered compared to previous work by Chen et al~\cite{Chen19a, Chen20, Chen23}.
However these additional regularity conditions are satisfied by many processes commonly used in modeling physic quantities,  such as Brownian motion, geometric Brownian motion, Ornstein-Uhlenbeck process, etc. These conditions allow us to use transport theory which plays a key role in this
study. The whole machinery is quite heavy.
  
\paragraph*{Organization of the paper: } We will first go through some specific mathematical background in Section \ref{sec:background}. In Section \ref{sec:CTsystems}, we will describe the processes that we will study by first defining Feller-Dynkin processes and then what we mean by diffusions.  In Section \ref{sec:functionals-metric}, we will define the first functional using the Markov kernels and the metrics as iterations of said functional. We will then prove that this pseudometric is indeed a fixpoint of our functional. In Section \ref{sec:logics}, we show that this pseudometric corresponds to some real-valued logic.  In Section \ref{sec:diffusions}, we will define a second functional based on trajectories and the corresponding pseudometric. We will further show that it corresponds to a different logic. 

\section{Mathematical background}
\label{sec:background}

We assume the reader to be familiar with basic measure
theory and topology. Nevertheless we provide a brief review of the relevant notions and theorems. Let us start with clarifying a few notations on integrals: Given a measurable space $X$ equipped with a measure $\mu$ and a measurable function $f: X \to \mathbb{R}$, we can write either $\int f~ \dee \mu$ or $\int f(x) ~ \mu (\dee x)$ interchangeably. The second notation will be especially useful when $\mu$ is a Markov kernel $P_t(x)$ for some $t \geq 0$ and $x \in X$: $\int f(y) ~P_t(x, \dee y) = \int f ~ \dee P_t(x)$.

All the proofs for this Section can be found in Appendix \ref{sec:proof-sec-background}.

   \subsection{Lower semi-continuity}

\begin{definition}
Given a topological space $X$, a function $f: X\to \mathbb{R}$ is \emph{lower semi-continuous} if for every $x_0 \in X$, $\liminf_{x \rightarrow x_0} f(x) \geq f(x_0)$. This condition is equivalent to the following one: for any $y \in \mathbb{R}$, the set $f^{-1}((y, + \infty)) = \{ x ~|~ f(x) > y\}$ is open in $X$.
\end{definition}

\begin{restatable}{lemma}{lemmasuplowersemicontinuous}
\label{lemma:sup-lower-semicontinuous}
Assume there is an arbitrary family of continuous functions $f_i : X \to \mathbb{R}$ ($i \in \cI$) and define $f(x) = \sup_{i \in \cI} f_i(x)$ for every $x \in X$.  Then $f$ is lower semi-continuous.
\end{restatable}

The converse is also true, as Baire's theorem states:
\begin{thm}
\label{thm:baire}
If $X$ is a metric space and if a function $f: X \to \mathbb{R}$ is lower semi-continuous,  then $f$ is the limit of an increasing sequence of real-valued continuous functions on $X$.
\end{thm}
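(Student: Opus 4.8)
The plan is to construct the approximating sequence explicitly by the \emph{inf-convolution} (McShane / Moreau--Yosida) regularisation, after first reducing to the case of a bounded function. For the reduction, fix an increasing homeomorphism $\varphi : \mathbb{R} \to (0,1)$, for instance $\varphi(t) = \tfrac12 + \tfrac1\pi \arctan t$. Since $\varphi$ is continuous and strictly increasing, $g := \varphi \circ f$ is again lower semi-continuous and now takes values in $(0,1)$. If I can produce continuous functions $g_n : X \to (0,1)$ with $g_1 \le g_2 \le \cdots$ and $g_n(x) \to g(x)$ for every $x$, then $f_n := \varphi^{-1} \circ g_n$ is continuous, the sequence $(f_n)$ is increasing (as $\varphi^{-1}$ is increasing), and $f_n \to \varphi^{-1} \circ g = f$ pointwise (as $\varphi^{-1}$ is continuous on $(0,1)$). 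So it suffices to treat a lower semi-continuous $g$ with $0 < g < 1$, provided the approximants are kept strictly inside $(0,1)$.

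For $n \ge 1$ I would then set
\[
  g_n(x) \;=\; \inf_{y \in X}\bigl( g(y) + n\, d(x,y) \bigr),
\]
where $d$ is the metric on $X$. Three facts are routine: (i) each $g_n$ is $n$-Lipschitz, since for fixed $y$ the map $x \mapsto g(y) + n\,d(x,y)$ is $n$-Lipschitz by the triangle inequality and an infimum of $n$-Lipschitz functions that is finite-valued is itself $n$-Lipschitz; hence each $g_n$ is continuous. (ii) Taking $y = x$ gives $g_n \le g$, and obviously $g_n \le g_{n+1}$, so $(g_n)$ is increasing. (iii) $0 < g_n < 1$: the upper bound is $g_n(x) \le g(x) < 1$; for the lower bound, lower semi-continuity of $g$ at $x$ yields $\delta > 0$ with $g(y) > g(x)/2$ whenever $d(x,y) < \delta$, and then for every $y$ one has $g(y) + n\,d(x,y) \ge \min\{g(x)/2,\ n\delta\}$, so $g_n(x) \ge \min\{g(x)/2,\ n\delta\} > 0$.

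It remains to verify the pointwise convergence $g_n(x) \to g(x)$. From (ii), $\limsup_n g_n(x) \le g(x)$. For the reverse inequality, choose for each $n$ a point $y_n$ with $g(y_n) + n\,d(x,y_n) \le g_n(x) + \tfrac1n$; since $g \ge 0$ this forces $n\,d(x,y_n) \le g(x) + \tfrac1n$, hence $y_n \to x$. Lower semi-continuity then gives $\liminf_n g(y_n) \ge g(x)$, and since $g_n(x) \ge g(y_n) + n\,d(x,y_n) - \tfrac1n \ge g(y_n) - \tfrac1n$ we obtain $\liminf_n g_n(x) \ge g(x)$. Combining this with the first paragraph proves the theorem.

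The main obstacle is the possible unboundedness of $f$: applied to $f$ directly, the inf-convolution $\inf_y\bigl(f(y) + n\,d(x,y)\bigr)$ can equal $-\infty$ at every point (for example $X = \mathbb{R}$, $f(y) = -e^{y}$), so the naive construction fails outright; passing through $\varphi$ repairs this, but then one must check that the $g_n$ never reach the endpoints $0$ or $1$ so that $\varphi^{-1}\circ g_n$ is well defined, and the strict lower bound $g_n > 0$ in step (iii) is precisely where lower semi-continuity enters in an essential way. Everything else --- Lipschitz continuity, monotonicity, and the convergence estimate --- is standard.
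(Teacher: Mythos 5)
Your proof is correct. The paper states Theorem \ref{thm:baire} as a classical result (Baire's theorem on lower semi-continuous functions) and gives no proof of its own, so there is nothing to compare against; what you have written is the standard Moreau--Yosida/inf-convolution argument, with the right care taken on the two points where it can go wrong: the reduction to a bounded function via a homeomorphism $\mathbb{R}\to(0,1)$ (needed because the inf-convolution of an unbounded-below function can be identically $-\infty$), and the verification that the regularised functions $g_n$ stay strictly inside $(0,1)$ so that composing back with $\varphi^{-1}$ is legitimate --- which is exactly where lower semi-continuity is used beyond the convergence step. The Lipschitz bound, monotonicity, and the pointwise convergence argument via the near-minimisers $y_n\to x$ are all carried out correctly.
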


    \subsection{Couplings}
\label{sec:coupling}

\begin{definition}
Let $(X, \Sigma_X, P)$ and $(Y, \Sigma_Y, Q)$ be two probability spaces.  Then a \emph{coupling} $\gamma$ of $P$ and $Q$ is a probability distribution on $(X \times Y, \Sigma_X \otimes \Sigma_Y)$ such that for every  $B_X \in \Sigma_X$, $\gamma(B_X \times Y) = P(B_X)$ and for every $B_Y \in \Sigma_Y$, $\gamma(X \times B_Y) = Q(B_Y)$ ($P, Q$ are called the \emph{marginals} of $\gamma$).  We write $\Gamma(P, Q)$ for the set of couplings of $P$ and~$Q$.
\end{definition}

\begin{restatable}{lemma}{lemmacouplingscompact}
\label{lemma:couplings-compact}
Given two probability measures $P$ and $Q$ on Polish spaces $X$ and $Y$ respectively,  the set of
couplings $\Gamma (P, Q)$ is compact under the topology of weak convergence. 
\end{restatable}

   \subsection{Optimal transport theory}

A lot of this work is based on optimal transport theory. This whole subsection is based on \cite{Villani08} and will be adapted for our framework.

Consider a Polish space $\cX$ and a lower semi-continuous cost function $c: \cX \times \cX \to [0,1]$ such that for every $x \in \cX$, $c(x,x) = 0$.

For every two probability distributions $\mu$ and $\nu$ on $\cX$, we write $W(c)(\mu, \nu)$ for the optimal transport cost from $\mu$ to $\nu$.
The Kantorovich duality states that (see Theorem 5.10(iii) of \cite{Villani08}),
\[W(c)(\mu, \nu) =  \min_{\gamma \in \Gamma(\mu, \nu)} \int c ~ \dee \gamma
= \max_{h \in \cH(c)} \left| \int h ~ \dee \mu - \int h ~\dee \nu  \right| \]
where $ \cH(c) = \{ h: \cX \to [0,1] ~|~ \forall x,y ~~ |h(x) - h(y)| \leq c(x,y) \}$.

\begin{remark}
\label{rem:dual-pseudometric}
Note that it is not the exact expression in Theorem 5.10 of \cite{Villani08} but the one found in Particular Case 5.16.  The former expression also applies to our case.  Indeed, according to Theorem 5.10, the dual expression is
\[
\min_{\gamma \in \Gamma(\mu, \nu)} \int c ~ \dee \gamma
= \max_{\phi, \psi} \left( \int \phi ~ \dee  \mu - \int \psi ~ \dee \nu \right) \]
where $\phi$ and $\psi$ are such that  $\forall x, y~ |\phi(x) - \psi(y)| \leq c(x,y)$.
However, since $c(x,x) = 0$ for every $x \in \cX$, then for any pair of functions $\phi$ and $\psi$ considered, for all $x, \in X$,
$ |\phi(x) - \psi(x)| \leq c(x,x) = 0,$ 
which implies that $\phi = \psi$.
\end{remark}
   
\begin{restatable}{lemma}{lemmaoptimaltransportpseudometric}
\label{lemma:optimal-transport-pseudometric}
If the cost function $c$ is a 1-bounded  pseudometric on $\cX$, then $W(c)$ is a 1-bounded pseudometric on the space of probability distributions on $\cX$.
\end{restatable}

We will later need the following technical lemma. Theorem 5.20 of \cite{Villani08} states that a sequence $W(c_k)(P_k, Q_k)$ converges to $W(c)(P, Q)$ if $c_k$ uniformly converges to $c$ and $P_k$ and $Q_k$ converge weakly to $P$ and $Q$ respectively. Uniform convergence in the cost function may be too strong a condition for us, but the following lemma is enough for what we need.

\begin{restatable}{lemma}{lemmawassersteinlimit}
\label{lemma:wasserstein-limit}
Consider a Polish space $\cX$ and a cost function $c: \cX \times \cX \to [0,1]$ such that there exists an increasing ($c_{k+1} \geq c_k$ for every $k$) sequence of continuous cost functions $c_k: \cX \times \cX \to [0,1]$ that converges to $c$ pointwise.  Then, given two probability distributions $P$ and $Q$ on $\cX$,
\[ \lim_{k \rightarrow \infty} W(c_k)(P,Q) = W(c)(P, Q). \]
\end{restatable}

\section{The processes we consider: diffusions}
\label{sec:CTsystems}

This work focuses on honest diffusions, continuous-time processes without loss of mass over time and with additional regularity conditions. Before formally defining diffusions, one needs first to define continuous-time Markov processes.
A typical example is Feller-Dynkin processes.  The additional regularity assump\-tions on diffusions give us more leverage to ``cope'' with continuous time and the many technical difficulties that come with it. Much of this material is adapted from \cite{Rogers00a} and we use
their notations.  Another useful source is \cite{Bobrowski05}.  

   \subsection{Definition of Feller-Dynkin Process}
   \label{app:FDP}

Let $E$ be a locally compact, Hausdorff
space with a countable base.  We also equip the set $E$ with its Borel $\sigma$-algebra $\mathcal{B}(E)$ that we will denote by $\cE$.  The previous topological hypotheses also imply that $E$ is $\sigma$-compact and Polish.
We will denote $\Delta$ for the 1-bounded metric that generates the topology making $E$ Polish.

\begin{definition}
A \emph{semigroup} of operators on any Banach space $X$ is a family
of linear continuous (bounded) operators $\cP_t: X \to X$ indexed by
$t\in\mathbb{R}_{\geq 0}$ such that
\[ \forall s,t \geq 0, \cP_s \circ \cP_t = \cP_{s+t} \qquad \text{(semigroup property)}\]
and
\[ \cP_0 = I \qquad \text{(the identity)}.  \]
\end{definition}
\begin{definition}
For $X$ a Banach space, we say that a semigroup $\cP_t:X\to X$ is \emph{strongly continuous} if 
\[ \forall x\in X, \lim_{t\downarrow 0}\| \cP_t x - x \| \to 0.  \]
\end{definition}


What the semigroup property expresses is that we do not need to understand the past (what
happens before time $t$) in order to compute the future (what happens after some
additional time $s$, so at time $t+s$) as long as we know the present (at time
$t$). 

We say that a continuous real-valued function $f$ on $E$ ``vanishes at
infinity'' if for every $\varepsilon > 0$ there is a compact subset
$K \subseteq E$ such that for every $x\in E\setminus K$, we have
$|f(x)| \leq \varepsilon$.  To give an intuition, if $E$ is the real line, this means that
$\lim_{x \rightarrow \pm \infty} f(x) = 0$.  The space $C_0(E)$ of continuous real-valued functions
that vanish at infinity is a Banach space with the $\sup$ norm.   
\begin{definition}
A \emph{Feller-Dynkin (FD) semigroup} is a strongly continuous
semigroup $(\hat{P}_t)_{t \geq 0}$ of linear operators on $C_0(E)$ satisfying the
additional condition:  
\[\forall t \geq 0 ~~~ \forall f \in C_0(E) \text{, if }~~ 0 \leq f \leq 1 \text{, then }~~ 0 \leq \hat{P}_t f \leq 1\]
\end{definition}

The Riesz representation theorem can be found as Theorem II.80.3 of \cite{Rogers00a}.  From it, we can derive the following important proposition which relates these FD-semigroups with Markov
kernels (see Appendix \ref{app:CTsystems} for the details).  This allows one to see the connection with familiar
probabilistic transition systems.  

\begin{proposition}
\label{prop:Riesz-use}
Given an FD-semigroup $(\hat{P}_t)_{t \geq 0}$ on $C_0(E)$, it is possible to define a
unique family of sub-Markov 
kernels $(P_t)_{t \geq 0} : E \times \mathcal{E} \to [0,1]$ such that for
all $t \geq 0$ and $f \in C_0(E)$, 
\[ \hat{P}_t f(x) = \int f(y) P_t(x, \dee y).  \]
\end{proposition}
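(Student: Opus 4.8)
The plan is to apply Corollary~\ref{cor:markov-from-semigroup} to each operator $\hat{P}_t$ individually, and then observe that the resulting kernels satisfy the required integral identity and uniqueness. First I would fix $t \geq 0$ and consider the operator $\hat{P}_t : C_0(E) \to C_0(E)$. Since $\hat{P}_t$ maps $C_0(E)$ into $C_0(E) \subseteq b\mathcal{E}$, it is in particular a bounded linear operator $C_0(E) \to b\mathcal{E}$, and the Feller-Dynkin condition says exactly that $0 \leq f \leq 1$ implies $0 \leq \hat{P}_t f \leq 1$, so $\hat{P}_t$ is sub-Markov in the sense of the corollary. Corollary~\ref{cor:markov-from-semigroup} then yields a unique sub-Markov kernel $P_t$ on $(E, \mathcal{E})$ with $\hat{P}_t f(x) = \int f(y)\, P_t(x, \dee y)$ for all $f \in C_0(E)$ and $x \in E$. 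Doing this for every $t \geq 0$ produces the family $(P_t)_{t \geq 0}$.

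Next I would address measurability and the kernel property: each $P_t$ is a genuine sub-Markov kernel, meaning $P_t(x, \cdot)$ is a sub-probability measure for each $x$ and $x \mapsto P_t(x, B)$ is $\mathcal{E}$-measurable for each $B \in \mathcal{E}$ — but this is already part of the conclusion of Corollary~\ref{cor:markov-from-semigroup}, so nothing extra is needed. The uniqueness of the family follows from the uniqueness clause in the corollary applied at each $t$: if $(P_t')_{t\geq 0}$ is another such family, then for each fixed $t$ both $P_t$ and $P_t'$ represent the same sub-Markov operator $\hat{P}_t$ on $C_0(E)$, hence $P_t = P_t'$ by the uniqueness in Corollary~\ref{cor:markov-from-semigroup}. (One could alternatively note that agreement of $\int f\, \dee P_t(x)$ and $\int f \, \dee P_t'(x)$ for all $f \in C_0(E) \subseteq C_b(E)$ forces $P_t(x) = P_t'(x)$ via Proposition~\ref{prop:two}, since $E$ is Polish.)

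There is essentially no hard part here; the proposition is a direct corollary of Corollary~\ref{cor:markov-from-semigroup} applied pointwise in $t$. The only place requiring a word of care is making sure the hypotheses of the corollary are met: namely that $\hat{P}_t$ genuinely lands in $b\mathcal{E}$ (it lands in $C_0(E)$, and continuous functions are Borel-measurable and, being in $C_0(E)$, bounded) and that the sub-Markov inequality $0 \le f \le 1 \Rightarrow 0 \le \hat P_t f \le 1$ transfers verbatim from the FD-semigroup definition. I would state the proof in two or three sentences to this effect, citing Corollary~\ref{cor:markov-from-semigroup} for both existence and uniqueness at each fixed $t$. Note that this proposition deliberately says nothing about how the kernels $P_t$ interact across different $t$ (e.g. the Chapman-Kolmogorov relations inherited from the semigroup property); those are not claimed here and so need not be proved.
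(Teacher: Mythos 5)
Your proposal is correct and is exactly the route the paper takes: Proposition~\ref{prop:Riesz-use} is presented there as a direct consequence of Corollary~\ref{cor:markov-from-semigroup}, applied to each operator $\hat{P}_t$ separately, with the Feller--Dynkin condition supplying the sub-Markov hypothesis and $C_0(E)\subseteq b\mathcal{E}$ supplying the codomain requirement. Your verification of the hypotheses and the uniqueness argument match the paper's intent, so there is nothing to add.
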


Given a time $t$ and a state $x$, we will often write $P_t(x)$ for the measure $P_t(x, \cdot)$ on $E$. Note that since $E$ is Polish, then $P_t(x)$ is tight.

\subsection{Trajectories}
\label{sec:trajectories}

A very important ingredient in the theory is the space of trajectories of a
FD-process (FD-semigroup) as a probability space.  This space does not
appear explicitly in the study of labelled Markov processes but one does
see it in the study of continuous-time Markov chains and jump processes.

We will use trajectories in Section \ref{sec:diffusions}. We will impose further conditions (see Definition \ref{def:diffusion}), but for now we follow \cite{Rogers00a} to introduce the concept of trajectories and the corresponding probabilities.

The standard topological hypotheses on the space $E$ allow us to perform the one-point compactification of the set $E$ by adding the absorbing state $\partial$.  We write $E_\partial$ for the corresponding set: $E_\partial = E \uplus \{ \partial\}$.  
Furthermore, the space $E_\partial$ is also metrizable (see Theorem 5.3 of \cite{Kechris95}) and we will also denote $\Delta$ the corresponding metric.

\begin{definition}
A function $\omega : [0,\infty)\to E_{\partial}$ is called \emph{cadlag} \footnote{cadlag
  stands for the French ``continu \`a droite, limite \`a gauche''} if for every $t \geq
0$, 
\[ \lim_{s > t, s \rightarrow t} \omega(s) = \omega (t) ~\text{ and } ~ \omega(t-) := \lim_{s < t, s \rightarrow t} \omega(s) \text{ exists.} \]

We define a \emph{trajectory} $\omega$ on $E_{\partial}$ to be a
cadlag function $[0,\infty)\to E_{\partial}$ such that
if either $\omega(t-) = \partial$ or
$\omega(t)=\partial$ then $\forall u 
\geq t, \omega(u) = \partial$. 
\end{definition}
As an intuition, a cadlag function $\omega$ is an ``almost continuous'' function with
jumping in a reasonable fashion.  

The intuition behind the additional condition of a trajectory is that once a trajectory
has reached the terminal state $\partial$, it stays in that state and similarly if a
trajectory ``should'' have reached the terminal state $\partial$, then there cannot be a
jump to avoid $\partial$ and once it is in $\partial$, it stays there.  

It is possible to associate to such an FD-semigroup a \emph{canonical FD-process}.
Let $\Omega$ be the set of all trajectories $\omega : [0, \infty) \to
E_\partial$.  
\begin{definition}
\label{def:canonical-FD-process}
The
\emph{canonical FD-process} associated to the FD-semigroup $(\hat{P}_t)_{t \geq 0}$ is
\[(\Omega, \cA, (\cA_t)_{t \geq 0}, (X_t)_{t \geq 0}, (\mathbb{P}^x)_{x \in E_\partial})\]
where
\begin{itemize}
\item the random variable $X_t: \Omega \to E_\partial$ is defined as $X_t(\omega) = \omega (t)$ for every time $t \geq 0$,
\item $\cA = \sigma (X_s ~|~ s \geq 0)$ \footnote{The $\sigma$-algebra $\mathcal{A}$ is the same as the one induced by the Skorohod metric, see theorem 16.6 of \cite{Billingsley99}}, $\cA_t = \sigma (X_s ~|~ 0 \leq s \leq t)$ for every time $t \geq 0$,
\item given any probability measure $\mu$ on $E_\partial$, by the
  Daniell-Kolmogorov theorem, there exists a
  unique probability measure $\mathbb{P}^\mu$ on $(\Omega, \mathcal{A})$
  such that for all
  $n \in \mathbb{N}, 0 \leq t_1 \leq t_2 \leq ...  \leq t_n$ and
  $x_0, x_1, ..., x_n$ in $E_\partial$,
\begin{align*}
   &\mathbb{P}^\mu (X_0 \in dx_0, X_{t_1} \in dx_1, ..., X_{t_n} \in dx_n) \\
  &~~ =
  \mu (dx_0) P_{t_1}^{+\partial}(x_0, dx_1)...P_{t_n -
    t_{n-1}}^{+\partial}(x_{n-1}, dx_n)
  \end{align*}
  where for every $t \geq 0$, $P_t^{+\partial}$ is the Markov kernel obtained by extending $P_t$ to $E_\partial$ by $P_t^{+\partial} (x, \{ \partial \}) = 1 - P_t (x, E)$ and $P_t^{+ \partial} (\partial, \{ \partial \}) = 1$.  The $dx_i$ in this equation should be understood as infinitesimal volumes.  This notation is standard in probability and should be understood by integrating it over measurable state sets $C_i$.
We set $\mathbb{P}^x = \mathbb{P}^{\delta_x}$.
\end{itemize}
\end{definition}
The
distribution $\mathbb{P}^x$ is a measure on the space of
trajectories for a system started at the point~$x$.


   \subsection{Diffusions}

We are now ready to formally define the processes that we will be studying throughout this work: diffusions. 

\begin{definition}
\label{def:honest}
A process described by the semigroup $(\hat{P}_t)_{t \geq 0}$ is \emph{honest} if $\hat{P}_t 1 = 1$ (where $1$ is the constant function on $E$ with value 1), i.e. for every $x \in E$ and every time $t \geq 0$, $P_t(x, E) = 1$
\end{definition}

Worded differently, a process is honest if there is no loss of mass over time. For an honest process, the additional terminal state $\partial$ is never reached.

If furthermore the trajectories of an honest process are continuous, then we may restrict $\Omega$ to the set of continuous functions $\mathbb{R}_{\geq 0} \to E$.  We can equip this set with the compact-open topology $\cT$: it is the topology generated by the subbase formed by the sets $V([0, T], U) = \{ \omega ~|~ \omega([0, T]) \subset U\}$ where $T \geq 0$ and $U$ is an open subset of $E$. Note that the $\sigma$-algebra generated by this topology is $\cA$. 

\begin{definition}
\label{def:discounted-uniform-metric}
Given a 1-bounded pseudometric $m$ on a set $E$ and $0< c < 1$, we define the \emph{discounted uniform pseudometric} $U_c(m)$ on the set of trajectories as follows:
\[ \text{for all } \omega, \omega' \in \Omega, \qquad U_c(m) (\omega, \omega') = \sup_t c^t m(\omega(t), \omega'(t)). \]
\end{definition}
If $m$ is a metric (resp. pseudometric),  then $U_c(m)$ is also a metric (resp. pseudometric). 

The compact-open topology is metrized by the discounted uniform metric $U_c(\Delta)$ for any $0< c< 1$. Furthermore, with this topology, the space $\Omega$ is Polish. The proofs can be found in Appendix \ref{app:CTsystems}.

\begin{definition}
\label{def:diffusion}
A \emph{diffusion} is an honest process satisfying some additional properties:
\begin{itemize}
\item the trajectories in $\Omega$ are continuous,
\item the map $x \mapsto \mathbb{P}^x$ is weakly continuous (wrt the compact-open topology on $\Omega$) meaning that if $y_n \underset{n \rightarrow \infty}{\longrightarrow} y$, then for every bounded and continuous function $f$ on the set of trajectories, $\int f \dee \mathbb{P}^{y_n} \underset{n \rightarrow \infty}{\longrightarrow} \int f \dee \mathbb{P}^{y}$.
\end{itemize}
\end{definition}

In \cite{Rogers00a},  FD-diffusions are also introduced. Our definition is different in the sense that we only define them for honest processes, but we require less conditions. While these conditions may appear restrictive, they are satisfied by classical stochastic processes such as Brownian motion.

\begin{example}
Brownian motion is a stochastic process first introduced to describe the irregular motion of
a particle being buffeted by invisible molecules.  Now its range of
applicability extends far beyond its initial application~\cite{Karatzas12}: it is used to model random noise in
finance, physics, biology, just to mention a few domains.  

We refer the reader to~\cite{Karatzas12} for the complete definition of Brownian motion.  A standard
one-dimensional Brownian motion is a Gaussian process on the real line $\mathbb{R}$ with independent and homogeneous increments. A helpful intuition about Brownian motion is that it is the distributional limit of a
random walk when the time between steps and the distance between states are taken to 0 properly.  
%
\end{example}

   \subsection{Observables}

In previous sections,  we defined FD-processes and diffusions.  In order to bring the processes more in line
with the kind of transition systems that have hitherto been studied in the computer
science literature, we also equip the state space $E$ with an additional continuous function
$obs: E \to [0,1]$. 
One should think of it as the interface between the process and the
user (or an external observer): external observers won't see the exact state in which the process
is at a given time, but they will see the associated \emph{observables}.  What could be a real-life example is the depth at which a diver goes: while the diver does not
know precisely his location underwater, at least his watch is giving him the depth at
which he is.

Note that the condition on the observable is a major difference from the previous work of Chen et al \cite{Chen19a, Chen20} since they used a countable set of atomic propositions $AP$ and $obs$ was a discrete function $E \to 2^{AP}$. 


\section{Defining a pseudometric through functionals}
\label{sec:functionals-metric}

Before diving into details, let us first give a brief overview of this section. We first define a functional $\cF$ on the lattice of 1-bounded pseudometrics based on optimal transport as follows: given a pseudometric $m$ on the state
space and two states $x$ and $y$, we consider the transport cost from $P_t(x)$ to $P_t(y)$ and we define $\cF(m)(x,y)$ to be its supremum over time $t$. 
This functional aims at describing how a metric would be ``deformed'' by the dynamics of the diffusion when the dynamics is described through the kernels $P_t(x)$.
We then define an increasing
sequence of pseudometrics by $\delta_0(x,y) = |obs(x) - obs(y)|$ (i.e.  how different the
states $x$ and $y$ look to an external observer before the evolution of the process starts)
and then iteratively applying the functional $\cF$:  $\delta_{n+1} = \cF(\delta_n)$. 

Our
``behavioural pseudometric'' $\overline{\delta}$ is obtained as $\sup_n \delta_n$.  It further turns
out that the pseudometric $\overline{\delta}$ is a fixpoint of  $\cF$, even though it cannot be defined directly that way. We will see during this section why it is necessary to go through this great length.

In fact, we need to add a discount factor and thus obtain a family of such functionals indexed by the discount factors and a corresponding family of pseudometrics by iteratively applying the functionals.

\subsection{On the set of pseudometrics}

The first part of this work takes place on the lattice of bounded pseudometrics. We are going to introduce the various lattices that we will later be using.

Let $\cM$ be the lattice of 1-bounded pseudometrics on the state space $E$ equipped with the
order $\leq$ defined as: $m_1 \leq m_2$ if and only if for every $(x,y)$,
$m_1(x,y) \leq m_2(x,y)$. 

We further define two sublattices of $\cM$. 

The first one is denoted $\cP$ and is the set of 1-bounded pseudometrics that are lower semi-continuous (wrt the original topology $\cO$ on $E$ generated by the metric $\Delta$ making the space $E$ Polish).

The second sublattice is denoted $\cC$ and is the set of pseudometrics $m \in \cM$ on the state space $E$ such that the topology  generated by $m$ on $E$ is a subtopology of the original topology $\cO$, \emph{i.e.}\  $m$ is a continuous function $E \times E \to [0,1]$.

We have the following inclusion: $\cC \subset \cP \subset \cM$.

\begin{remark}
\label{rem:lattices}
One has to be careful here. The topology $\cO$ on $E$ is generated by the 1-bounded metric $\Delta$, and hence $\Delta$ is in $\cC$. However, we can define many pseudometrics that are not related to $\cO$.  As an example, the discrete pseudometric\footnote{The discrete pseudometric is defined as $m(x, y) = 1$ if $x \neq y$ and $m(x,x) = 0$} on the real line is not related to the usual topology on $\mathbb{R}$. 
We will elaborate on this in Remark \ref{rem:subtlety} once we have the context for our work.
\end{remark}

\subsection{Defining our functional}
\label{sec:def-functional}

Given a discount factor $0 < c< 1$,  we define the functional $\cF_c : \cP \to \cM$ as follows: for every pseudometric $m \in \cP$ and every two states $x,y$,
\[ \cF_c(m)(x,y) = \sup_{t \geq 0} c^t W(m) (P_t(x), P_t(y)).\]
$ \cF_c(m)(x,y) $ compares all the distributions $P_t(x)$ and $P_t(y)$  through transport theory and takes their supremum. 

Recall that using the Kantorovich duality, we have:
\begin{align*}
W(m) (P_t(x), P_t(y)) & = \min_{\gamma \in \Gamma(P_t(x), P_t(y))} \int m ~ \dee \gamma\\
& =  \max_{h \in \cH(m)}  \left(\hat{P}_t h(x) - \hat{P}_t h(y)\right).
\end{align*}

Let us make a few remarks. 
First, we will use the Kantorovich duality throughout this work. It only holds for probability measures which is why we restrict to honest processes. Second, Lemma \ref{lemma:optimal-transport-pseudometric} ensures that $\cF_c(m)$ is indeed in $\cM$, as a supremum of pseudometrics.
Third, the functional $\cF_c$ can be defined for a discount factor $0< c \leq 1$ but we will later require $0<c<1$ (Section \ref{sec:func-restrict-cont}). 

\begin{remark}
\label{rem:subtlety}
We can now further elaborate on Remark \ref{rem:lattices}. As we have mentioned, given a pseudometric $m$ in $\cM$,  $m$ should only be seen as a non-negative cost function that satisfies a few additional constraints such as the triangular inequality. 
The topology generated by $m$ does not need not be related to the topology $\cO$ on $E$.

This is why we restricted the definitions of $\cF_c$ to $\cP$: we need pseudometrics that are lower semi-continuous in order to be able to use transport theory and thus for $\cF_c$ to make sense.

Furthermore, we later on need to iteratively apply $\cF_c$ in search of a fixpoint. While $\cF_c(m)$ is a pseudometric (for $m \in \cP$), there is no reason for it to be in $\cP$. We cannot hastily apply $\cF_c$ again. As we will see in the next section,
we will further restrict to the sublattice $\cC$ of continuous pseudometrics in order to be able to apply our
functional iteratively. 

To sum up, there are two aspects to be cautious about: first, we can only define $\cF_c$ on $\cP$ instead of the whole lattice $\cM$; second, the image under $\cF_c$ of a pseudometric in $\cP$ does not need to be in $\cP$ which means that we cannot iteratively apply $\cF_c$.
This subtlety was not present in ~\cite{Gupta04,Gupta06}.
\end{remark}

As a direct consequence of the definition of $\cF_c$, we have that $\cF_c$  is monotone : if $m_1 \leq m_2$ in $\cP$, then $\cF_c(m_1) \leq \cF_c(m_2)$.

\begin{restatable}{lemma}{lemmaordermFc}
\label{lemma:order-m-Fc}
For every pseudometric $m$ in $\cP$,  discount factor $0 < c \leq 1$ and pair of states~$x,y$,
\[ m(x,y) \leq \cF_c(m)(x,y). \]
\end{restatable}

The proof can be found in the Appendix \ref{app:functionals-metric}

\subsection{When restricted to continuous pseudometrics}
\label{sec:func-restrict-cont}

As noted in Remark \ref{rem:subtlety}, in order to define a sequence of pseudometrics, we first need to make sure that we can indeed apply $\cF_c$ iteratively: given $m \in \cP$,  $\cF_c(m)$ is not necessarily in $\cP$, i.e.  to be lower semi-continuous. However, we will show in Lemma \ref{lemma:Fc-cont-is-cont} that if we restrict to continuous pseudometrics, then we can iteratively apply our functional.

This is where we need that the discount factor $c<1$. The condition that $c< 1$ enables us to maintain continuity by allowing to bound the time interval we consider. Indeed, given $T> 0$, for any time $t \geq T$ and any $x, y \in E$, we know that $c^t W(m) (P_t(x), P_t(y)) \leq~c^T$. 

\begin{lemma}
\label{lemma:Fc-cont-is-cont}
Consider a pseudometric $m \in \cC$.  Then the topology generated by $\cF_c(m)$ is a subtopology of the original topology $\cO$ for any discount factor $0< c< 1$. 
\end{lemma}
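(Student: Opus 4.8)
The plan is to prove directly that $\cF_c(m)$, regarded as a function $E\times E\to[0,1]$, is continuous; by the definition of the sublattice $\cC$ this is precisely the statement that the topology generated by $\cF_c(m)$ is a subtopology of $\cO$. Throughout, write $g_t(x,y)=c^{t}\,W(m)(P_t(x),P_t(y))$, so that $\cF_c(m)=\sup_{t\ge 0}g_t$ and $0\le g_t\le 1$. Since $\cF_c(m)$ is a supremum of (once we check it) continuous functions, lower semi-continuity is automatic by Lemma \ref{lemma:sup-lower-semicontinuous}; the substance of the proof is the other direction, and this is exactly where the hypothesis $c<1$ is used: it lets us discard the times $t\ge T$ at a uniform cost $c^{T}$ and reduce to a compact time window.

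The first step is to show that $(t,x,y)\mapsto g_t(x,y)$ is \emph{jointly} continuous on $[0,\infty)\times E\times E$. Because the FD-semigroup $(\hat{P}_t)$ is strongly continuous, the semigroup property promotes strong continuity at $0$ to norm-continuity of $t\mapsto\hat{P}_t f$ on all of $[0,\infty)$, for every $f\in C_0(E)$. Since moreover each $\hat{P}_t f$ lies in $C_0(E)$ and is therefore continuous, for $(t_n,x_n)\to(t,x)$ one has
\[
\bigl|\hat{P}_{t_n}f(x_n)-\hat{P}_t f(x)\bigr|\ \le\ \bigl\|\hat{P}_{t_n}f-\hat{P}_t f\bigr\|_{\infty}\ +\ \bigl|\hat{P}_t f(x_n)-\hat{P}_t f(x)\bigr|\ \longrightarrow\ 0,
\]
so $(t,x)\mapsto\hat{P}_t f(x)=\int f\,\dee P_t(x)$ is jointly continuous. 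As the process is honest, every $P_t(x)$ is a probability measure, so convergence of the integrals against all $f\in C_0(E)$ already forces weak convergence of the measures; hence $(t,x)\mapsto P_t(x)$ is jointly weakly continuous. Finally, since $m\in\cC$ is continuous and $1$-bounded, Theorem 5.20 of \cite{Villani08} — applied with the constant sequence of cost functions equal to $m$, for which uniform convergence is trivial — gives that $(t,x,y)\mapsto W(m)(P_t(x),P_t(y))$ is continuous; multiplying by the continuous factor $c^{t}$ yields joint continuity of $g$.

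For $T>0$ put $F_T=\sup_{0\le t\le T}g_t$. Joint continuity of $g$ together with compactness of $[0,T]$ gives that $F_T$ is continuous on $E\times E$ (the standard maximum-theorem argument: for $(x_n,y_n)\to(x,y)$, choose near-maximising times $t_n\in[0,T]$, pass to a convergent subsequence $t_n\to t^{\ast}$, and invoke joint continuity). Now we bring in the discount factor: from $W(m)\le 1$ and $0<c<1$ we get $0\le g_t\le c^{t}$, and hence, for all $(x,y)$,
\[
0\ \le\ \cF_c(m)(x,y)-F_T(x,y)\ \le\ \sup_{t>T}g_t(x,y)\ \le\ \sup_{t>T}c^{t}\ =\ c^{T}.
\]
Thus $F_T\to\cF_c(m)$ uniformly on $E\times E$ as $T\to\infty$, and a uniform limit of continuous functions is continuous. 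Since $\cF_c(m)$ is already known to be a $1$-bounded pseudometric (Lemma \ref{lemma:optimal-transport-pseudometric} applied termwise), this shows $\cF_c(m)\in\cC$, i.e.\ the topology it generates is a subtopology of $\cO$, which is the claim.

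The main obstacle — and the reason the proof is not a one-liner — is the joint continuity of $g_t(x,y)$ in $(t,x,y)$: separate continuity in $t$ (for fixed states) and in $(x,y)$ (for fixed $t$) is \emph{not} enough to push the supremum over $[0,T]$ inside. One genuinely needs the stronger fact that $t\mapsto\hat{P}_t f$ is continuous in the sup norm, which is what strong continuity of the FD-semigroup provides, together with honesty, which keeps the limiting objects probability measures so that the transport-continuity input (Theorem 5.20 of \cite{Villani08}) applies via weak — not merely vague — convergence. Once that is in place, the truncation-by-$c^{T}$ argument is routine.
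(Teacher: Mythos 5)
Your proof is correct, and it takes a genuinely different and arguably cleaner route than the paper's. The paper fixes $x$, takes a sequence $y_n\to y$, and argues by contradiction: it picks a near-optimal time $t$ for the pair $(x,y)$, uses the discount factor to confine the near-optimal times for the pairs $(x,y_n)$ to a compact window $[0,T]$, extracts a convergent subsequence of those times, and derives two incompatible inequalities. You instead prove joint continuity of $(t,x,y)\mapsto c^{t}W(m)(P_t(x),P_t(y))$ --- via norm-continuity of the orbit map $t\mapsto\hat{P}_t f$ for the contraction semigroup, the upgrade from vague to weak convergence of the kernels using honesty, and Theorem 5.20 of Villani --- then take the supremum over the compact window $[0,T]$ to obtain a continuous $F_T$, and conclude because $F_T\to\cF_c(m)$ uniformly at rate $c^{T}$. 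Both arguments use $c<1$ in the same essential way (to truncate time) and both lean on Villani's stability theorem; the difference is that your decomposition isolates the joint-continuity input explicitly, which the paper's proof in fact also needs but leaves implicit (its passage to the limit $\Phi(t_k,x,z_k)\to\Phi(t',x,y)$ with $t_k\to t'$ and $z_k\to y$ simultaneously is exactly joint continuity in the time and state variables). The price you pay is the extra semigroup-theoretic step that strong continuity at $0$ plus contractivity yields norm-continuity of $t\mapsto\hat{P}_t f$ on all of $[0,\infty)$; what you buy is a direct, non-contradiction proof that additionally gives joint continuity of $\cF_c(m)$ on $E\times E$ (the paper only establishes separate continuity in $y$ for fixed $x$) together with a quantitative uniform approximation by the truncated suprema.
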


\begin{proof}
Since $c$ and $m$ are fixed throughout the proof, we will omit noting them and for instance write $\cF(x,y)$ for $\cF_c(m)(x,y)$ and $W(P_t(x), P_t(y))$ for $W(m)(P_t(x), P_t(y))$. We will also write $\Phi(t,x,y) = c^t W(m)(P_t(x), P_t(y))$, i.e. $\cF(x,y) = \sup_t \Phi(t,x,y)$.

It is enough to show that for a fixed state $x$, the map $y \mapsto \cF(x,y)$ is continuous.

Pick $\epsilon >0$ and a sequence of states $(y_n)_{n \in \mathbb{N}}$ converging to $y$.  We want to show that there exists $M$ such that for all $n \geq M$,
\begin{align}
\label{eq:limitF}
|\cF(x, y) - \cF(x, y_n)| & \leq \epsilon.
\end{align}

Pick $t$ such that $\cF(x,y) = \sup_s \Phi(s, x,y) \leq \Phi(t, x, y) + \epsilon / 4$, i.e.
\begin{align}
\label{eq:numbera}
| \Phi(t, x, y)  - \cF(x,y)| \leq  \epsilon / 4.
\end{align}

Note that by the second condition of Definition \ref{def:diffusion}, $P_t(y_n)$ converges weakly to $P_t(y)$ and hence we can apply Theorem 5.20 of \cite{Villani08} and get:
\[ \lim_{n \rightarrow \infty} W(P_t(x), P_t(y_n)) = W(P_t(x), P_t(y)). \]
This means that there exists $N'$ such that for all $n \geq N'$, 
\[ |W(P_t(x), P_t(y_n)) - W(P_t(x), P_t(y))| \leq \epsilon/4. \]
This further implies that for all $n \geq N'$, 
\begin{align}
\label{eq:numberb}
|\Phi(t, x, y_n) - \Phi(t, x, y)|
&  \leq c^t \epsilon/4  \leq \epsilon / 4.
\end{align}

In order to show (\ref{eq:limitF}), it is enough to show that there exists $N$ such that for every $n \geq N$,
\begin{align}
\label{eq:limitF-intermediary}
|\Phi(t, x, y_n) - \cF(x, y_n)| & \leq \epsilon / 2.
\end{align}
Indeed, in that case, $\forall n \geq \max\{N, N'\}$,  using Equations (\ref{eq:numbera}) and (\ref{eq:numberb}),
\begin{align*}
&|\cF(x, y) - \cF(x, y_n)|\\
& \leq |\cF(x, y) - \Phi(t, x, y)| + |\Phi(t, x, y) - \Phi(t, x, y_n)| \\
& \qquad + |\Phi(t, x, y_n) - \cF(x, y_n)| \\
& \leq \epsilon / 4 + \epsilon / 4 + \epsilon / 2 = \epsilon.
\end{align*}

So let us show (\ref{eq:limitF-intermediary}). Assume it is not the case: for all $N$, there exists $n \geq N$ such that $ |\Phi(t, x, y_n) - \cF(x, y_n)| > \epsilon / 2$, i.e.
\[ \Phi(t, x, y_n) + \epsilon / 2 <  \cF(x, y_n).  \]
Define the sequence $(N_k)_{k \in \mathbb{N}}$ by: $N_{-1} = -1$ and if $N_k$ is defined, define $N_{k+1}$ to be the smallest $n \geq N_k + 1$ such that $\Phi(t, x, y_n) + \epsilon / 2 <  \cF(x, y_n)$.
In particular for every $k \in \mathbb{N}$,  $\cF(x, y_{N_k})  > \epsilon / 2$. There exists $T$ such that for every $s \geq T$, $c^s < \epsilon/2$.  We thus have that
\[ \forall k \in \mathbb{N} \quad \cF(x, y_{N_k}) = \sup_{ 0 \leq s \leq T} \Phi(s, x, y_{N_k}).\]
Therefore for every $k \in \mathbb{N}$, there exists $s_k \in [0, T]$ such that
\begin{align}
\label{eq:numberc}
\cF(x, y_{N_k}) \leq \Phi(s_k, x, y_{N_k}) + \epsilon / 8.
\end{align}
We get a sequence $(s_k)_{k \in \mathbb{N}} \subset [0,T]$, and there is thus a subsequence $(t_k)_{k \in \mathbb{N}}$ converging to some $t' \in [0, T]$. There is a corresponding subsequence $(z_k)_{k \in \mathbb{N}}$ of the original sequence $(y_{N_k})_{k \in \mathbb{N}}$. Since $\lim_{n \rightarrow \infty} y_n = y$ , then $\lim_{k \rightarrow \infty} z_k = y$. 

We constructed the sequence $(N_k)_{k \in \mathbb{N}}$ such that  $\Phi(t, x, y_{N_k}) + \epsilon / 2 <  \cF(x, y_{N_k})$. Hence by Equation (\ref{eq:numberc}),
\[  \Phi(t, x, z_k) + \epsilon / 2 < \cF (x, y_{N_k}) \leq \Phi(t_k, x, z_k) + \epsilon / 8,  \]
which means that by taking the limit $k \to \infty$,
\begin{align}
\label{eq:contr1}
 \Phi(t, x, y) + \epsilon / 2 &\leq \Phi(t', x, y) + \epsilon / 8.
\end{align}

At the very start of this proof, we had picked $t$ such that $\cF(x,y) = \sup_s \Phi(s, x,y) \leq \Phi(t, x, y) + \epsilon / 4$ which means that
\begin{align}
\label{eq:contr2}
 \Phi(t', x, y) &\leq \Phi(t, x,y) + \epsilon / 4 .
\end{align}
Equations (\ref{eq:contr1}) and (\ref{eq:contr2}) are incompatible which concludes the proof.
\end{proof}

\subsection{Defining our family of pseudometrics}
\label{sec:def-pseudometrics-F}

We are now finally able to iteratively apply our functional $\cF_c$ on continuous pseudometrics.

Since $obs$ is a continuous function $E \to \mathbb{R}_{\geq 0}$ and from Lemma \ref{lemma:Fc-cont-is-cont}, we can define the sequence of pseudometrics in $\cC$ for each $0 < c< 1$:
\begin{align*}
\delta^c_0(x,y) & = |obs(x) - obs(y)|,\\
\delta^c_{n+1} &= \cF_c(\delta_n^c).
\end{align*}
From Lemma \ref{lemma:order-m-Fc}, for every two states $x$ and $y$, $\delta^c_{n+1}(x,y) \geq \delta^c_n(x,y)$.  Define the pseudometric $\overline{\delta}^c = \sup_n \delta^c_n$ (which is also a limit since the sequence is non-decreasing). 

As a direct consequence of Lemma \ref{lemma:sup-lower-semicontinuous},  the pseudometric $\overline{\delta}^c$ is lower semi-continuous and is thus in the lattice $\cP$ for any $0 < c < 1$.

\begin{remark}
\label{rem:tarski}
Note that the lattice $\cC$ is not complete which means that, although the metrics $\delta^c_n$ all belong to $\cC$,   $\overline{\delta}^c$ does not need to be in $\cC$.  For that reason, we cannot use the Knaster-Tarski theorem in this work.
\end{remark}

\subsection{Fixpoints}
\label{sec:metrics-fixpoints}

Even though we are not able to define the metric  $\overline{\delta}^c$ as a fixpoint directly, it is actually a fixpoint.

\begin{theorem}
\label{thm:m-fixpointFc}
The pseudometric $\overline{\delta}^c$ is a fixpoint for $\cF_c$.
\end{theorem}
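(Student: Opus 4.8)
The plan is to show $\cF_c(\overline{\delta}^c)=\overline{\delta}^c$ by a single interchange-of-suprema computation, the crucial ingredient being Lemma~\ref{lemma:wasserstein-limit} applied to the increasing sequence of \emph{continuous} pseudometrics $(\delta_n^c)_n$. Recall the set-up: each $\delta_n^c$ lies in $\cC$ (by Lemma~\ref{lemma:Fc-cont-is-cont}, starting from the continuous $\delta_0^c$), the sequence is non-decreasing, and it converges pointwise to $\overline{\delta}^c$ by definition; moreover $\overline{\delta}^c\in\cP$ (a sup of continuous functions is lower semi-continuous), so $\cF_c(\overline{\delta}^c)$ is defined.

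First I would fix states $x,y$ and a time $t\geq 0$. Since the process is honest (Definition~\ref{def:honest}), $P_t(x)$ and $P_t(y)$ are probability measures on the Polish space $E$, so Lemma~\ref{lemma:wasserstein-limit} applies with $\cX=E$, cost function $\overline{\delta}^c$, approximating sequence $(\delta_n^c)_n$, and marginals $P_t(x),P_t(y)$; it yields
\[ W(\overline{\delta}^c)(P_t(x),P_t(y)) \;=\; \lim_{n\to\infty} W(\delta_n^c)(P_t(x),P_t(y)) \;=\; \sup_{n} W(\delta_n^c)(P_t(x),P_t(y)), \]
where the last equality holds because $W$ is monotone in the cost (immediate from the primal form $W(c)(\mu,\nu)=\min_{\gamma\in\Gamma(\mu,\nu)}\int c~\dee\gamma$), so the sequence is non-decreasing in $n$.

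Then I would multiply by $c^t$ and take the supremum over $t\geq 0$:
\[ \cF_c(\overline{\delta}^c)(x,y) \;=\; \sup_{t\geq 0}\,\sup_{n} c^t W(\delta_n^c)(P_t(x),P_t(y)) \;=\; \sup_{n}\,\sup_{t\geq 0} c^t W(\delta_n^c)(P_t(x),P_t(y)) \;=\; \sup_{n}\cF_c(\delta_n^c)(x,y) \;=\; \sup_n \delta_{n+1}^c(x,y). \]
The interchange of the two suprema is legitimate for arbitrary index sets, both sides being the supremum over all pairs $(t,n)$. Finally $\sup_n \delta_{n+1}^c(x,y)=\sup_n \delta_n^c(x,y)=\overline{\delta}^c(x,y)$ since the sequence is non-decreasing, giving $\cF_c(\overline{\delta}^c)=\overline{\delta}^c$. (Alternatively one can argue the two inequalities separately: $\overline{\delta}^c\leq\cF_c(\overline{\delta}^c)$ is immediate from Lemma~\ref{lemma:order-m-Fc}, and $\cF_c(\overline{\delta}^c)\leq\overline{\delta}^c$ from the computation above; the displayed identity delivers both at once.)

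The delicate point — and the reason this is not an instant corollary of Knaster--Tarski, cf.\ Remark~\ref{rem:tarski} — is precisely the commutation of the Wasserstein functional with the increasing limit $\delta_n^c\uparrow\overline{\delta}^c$. This is where continuity of the approximants, not merely lower semi-continuity of the limit, is essential: Lemma~\ref{lemma:wasserstein-limit} requires the cost to be approached from below by an increasing sequence of \emph{continuous} costs, and it is exactly Lemma~\ref{lemma:Fc-cont-is-cont} — whose proof uses $c<1$ to truncate the time horizon — that keeps every $\delta_n^c$ inside $\cC$ so that this approximation is available. I would also double-check that no uniform-convergence hypothesis sneaks in (it does not; Lemma~\ref{lemma:wasserstein-limit} was crafted precisely to avoid it) and that honesty is genuinely used, through the requirement that $W$ and Kantorovich duality operate on probability measures.
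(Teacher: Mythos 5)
Your proof is correct, but it takes a genuinely different route from the paper. The paper establishes the key identity $W(\overline{\delta}^c)(P_t(x),P_t(y))=\sup_n W(\delta_n^c)(P_t(x),P_t(y))$ by applying Sion's minimax theorem to the pairing $(\gamma,m)\mapsto\int m~\dee\gamma$ on $\Gamma(P_t(x),P_t(y))\times Y$, where $Y$ is the convex hull of the $\delta_n^c$; the exchange of $\min_\gamma$ and $\sup_m$, combined with monotone convergence under the integral, delivers the identity. You obtain the same identity directly from Lemma~\ref{lemma:wasserstein-limit}, whose hypotheses (increasing sequence of continuous $[0,1]$-valued costs converging pointwise to $\overline{\delta}^c$, probability marginals) you verify correctly; after that the two arguments coincide (interchange of the suprema over $t$ and $n$, then $\sup_n\delta_{n+1}^c=\overline{\delta}^c$). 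The two routes are close cousins: both ultimately rest on compactness of the coupling set (Lemma~\ref{lemma:couplings-compact}), used once inside Sion's theorem and once inside the subsequence-extraction argument of Lemma~\ref{lemma:wasserstein-limit}. Your version is arguably more economical, since Lemma~\ref{lemma:wasserstein-limit} is already proved in the paper and is reused verbatim in the proof of Theorem~\ref{thm:lambda-fixpoint-F}, so you avoid importing the minimax machinery and the auxiliary convex set $Y$; the paper's minimax formulation has the mild advantage of not needing to name an approximating sequence explicitly inside the Wasserstein functional. Your closing remarks on where continuity of the approximants (versus mere lower semi-continuity of the limit) and honesty enter are accurate and match the paper's own emphasis in Remarks~\ref{rem:subtlety} and~\ref{rem:tarski}.
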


\begin{proof}
We will omit writing $c$ as an index for the pseudometrics $\delta_n$ and $\overline{\delta}$ throughout this proof. Fix two states $x,y$ and a time $t$.

The space of finite measures on $E \times E$ is a linear topological space.
Using Lemma \ref{lemma:couplings-compact}, we know that the set of couplings $\Gamma (P_t(x), P_t(y))$ is a compact subset.  Besides, it is also convex.

The space of bounded pseudometrics on $E$ is also a linear topological space.  We have defined a sequence $\delta_0, \delta_1, ...$ on that space.  We define $Y$ to be the set of linear combinations of pseudometrics $\sum_{n \in \mathbb{N}} a_n \delta_n$ such that for every $n$, $a_n \geq 0$ (and finitely many are non-zero) and $\sum_{n \in \mathbb{N}} a_n = 1$.  This set $Y$ is convex.

Define the function
\begin{align*}
\Xi : \Gamma (P_t(x), P_t(y)) \times Y & \to [0,1]\\
(\gamma, m) & \mapsto\int m ~ \dee \gamma.
\end{align*}

For $\gamma \in \Gamma (P_t(x), P_t(y))$, the map $\Xi(\gamma, \cdot)$ is continuous by dominated convergence theorem and it is monotone and hence quasiconcave.  For a given $m \in Y$, by definition of the Lebesgue integral, $\Xi(\cdot, m)$ is continuous and linear.

We can therefore apply Sion's minimax theorem:
\begin{align}
\label{eq:minimax-F}
\min_{\gamma \in \Gamma (P_t(x), P_t(y))} \sup_{m \in Y} \int m ~ \dee \gamma =  \sup_{m \in Y} \min_{\gamma \in \Gamma (P_t(x), P_t(y))}  \int m ~ \dee \gamma.
\end{align}

Now note that for an arbitrary functional $\Psi: Y \to \mathbb{R}$ such that for any two pseudometrics $m$ and $m'$,  $m \leq m' \Rightarrow \Psi(m) \leq \Psi(m')$, 
\begin{align}
\label{eq:numberd}
\sup_{m \in Y} \Psi(m) = \sup_{n \in \mathbb{N}} \Psi (\delta_n).
\end{align}

Indeed since $\delta_n \in Y$ for every $n$,  we have that $\sup_{m \in Y} \Psi(m) \geq \sup_{n \in \mathbb{N}} \Psi (\delta_n)$.  For the other direction, note that for every $m \in Y$, there exists $n$ such that for all $k \geq n$, $a_k = 0$.  Thus $m \leq \delta_n$ and therefore $\Psi(m) \leq \Psi(\delta_n)$ which is enough to prove the other direction.

The right-hand side of Equation (\ref{eq:minimax-F}) is
\begin{align*}
& \sup_{m \in Y} \min_{\gamma \in \Gamma (P_t(x), P_t(y))}  \int m ~ \dee \gamma = \sup_{m \in Y} W(m)(P_t(x), P_t(y))\\
& = \sup_{n \in \mathbb{N}} W(\delta_n)(P_t(x), P_t(y))\quad \text{(previous result in Equation (\ref{eq:numberd})).}
\end{align*}

The left-hand side of Equation (\ref{eq:minimax-F}) is
\begin{align*}
& \min_{\gamma \in  \Gamma (P_t(x), P_t(y))} \sup_{m \in Y} \int m ~ \dee \gamma\\
& = \min_{\gamma \in  \Gamma (P_t(x), P_t(y))} \sup_{n \in \mathbb{N}} \int \delta_n ~ \dee \gamma \quad \text{(previous result in Equation (\ref{eq:numberd}))}\\
& = \min_{\gamma \in  \Gamma (P_t(x), P_t(y))}  \int \sup_{n \in \mathbb{N}} \delta_n ~ \dee \gamma \quad \text{(dominated convergence theorem)}\\
& = \min_{\gamma \in  \Gamma (P_t(x), P_t(y))}  \int \overline{\delta} ~ \dee \gamma \\
& = W(\overline{\delta})(P_t(x), P_t(y)).
\end{align*}

We thus have that
\begin{align*}
& \cF_c(\overline{\delta}) (x, y) = \sup_{t \geq 0} c^t W(\overline{\delta})(P_t(x), P_t(y))\\
& = \sup_{t \geq 0} c^t \sup_{n \in \mathbb{N}} W(\delta_n)(P_t(x), P_t(y)) \quad \text{using Sion's minimax theorem}\\
& = \sup_{n \in \mathbb{N}}  \sup_{t \geq 0} c^t W(\delta_n)(P_t(x), P_t(y)) = \sup_{n \in \mathbb{N}}  \cF_c(\delta_n)(x,y)\\
& =  \sup_{n \in \mathbb{N}} \delta_{n+1} (x,y) \quad \text{by definition of } (\delta_n)_{n \in \mathbb{N}}\\
& = \overline{\delta}(x,y) \quad \text{by definition of } \overline{\delta}.
\end{align*}
\end{proof}

\begin{lemma}
\label{lemma:comparison-fixpointFc}
Consider a discount factor $0 < c < 1$ and a pseudometric $m$ in $\cP$ such that
\begin{enumerate}
\item $m$ is a fixpoint for $\cF_c$,
\item for every to states $x$ and $y$, $m(x,y) \geq |obs(x) - obs(y)|$,
\end{enumerate}
then $m \geq \overline{\delta}^c$.
\end{lemma}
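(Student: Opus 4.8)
The plan is to prove by induction on $n$ that $m \geq \delta^c_n$ for every $n \in \mathbb{N}$, and then to conclude by passing to the supremum. This is the natural strategy since $\overline{\delta}^c$ was \emph{built} as the limit of the $\delta^c_n$, and a fixpoint of $\cF_c$ dominating the seed $\delta^c_0$ should dominate every stage of the iteration.

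For the base case $n=0$, recall that $\delta^c_0(x,y) = |obs(x) - obs(y)|$, so hypothesis (2) is literally the statement $m(x,y) \geq \delta^c_0(x,y)$ for all states $x,y$, i.e. $m \geq \delta^c_0$. For the inductive step, I would assume $m \geq \delta^c_n$ and note that both $m$ and $\delta^c_n$ lie in $\cP$ (the former by assumption, the latter because $\delta^c_n \in \cC \subseteq \cP$), so that the monotonicity of $\cF_c$ on $\cP$ applies and gives $\cF_c(m) \geq \cF_c(\delta^c_n)$. By hypothesis (1) the left-hand side is just $m$, and by definition of the sequence the right-hand side is $\delta^c_{n+1}$; hence $m \geq \delta^c_{n+1}$, closing the induction.

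Once $m \geq \delta^c_n$ holds for all $n$, I would take the pointwise supremum: for every pair $x,y$, $m(x,y) \geq \sup_n \delta^c_n(x,y) = \overline{\delta}^c(x,y)$, which is exactly $m \geq \overline{\delta}^c$.

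The argument is essentially routine; the only points that deserve a moment's care are bookkeeping ones — that every $\delta^c_n$ (and $m$) stays inside the lattice $\cP$ on which $\cF_c$ and its monotonicity are defined, so that each application of $\cF_c$ in the induction is legitimate, and that the fixpoint hypothesis $\cF_c(m) = m$ is precisely what allows the bound to propagate from stage $n$ to stage $n+1$. There is no real obstacle here: this lemma, together with Theorem \ref{thm:m-fixpointFc}, is what pins down $\overline{\delta}^c$ as the \emph{least} fixpoint of $\cF_c$ above the observation pseudometric.
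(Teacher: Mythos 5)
Your proof is correct and follows essentially the same route as the paper's: induction on $n$ with the base case given by hypothesis (2), the inductive step by monotonicity of $\cF_c$ combined with the fixpoint hypothesis, and the conclusion by passing to the supremum. The extra bookkeeping you note about staying inside $\cP$ is sound and does not change the argument.
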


\begin{proof}
We show by induction on $n$ that $m \geq \delta_n^c$.
First, we have that $m \geq \delta_0^c$. 
Then assume it is true for $n$, i.e. $m \geq \delta_n^c$ and let us show that it also holds for $n+1$.  Since the functional $\cF_c$ is monotone:
\[ m = \cF_c(m) \geq \cF_c(\delta_n^c) = \delta_{n+1}^c. \]

Since for every $n$, $m \geq \delta_n^c$,  $m \geq \sup_n \delta_n^c = \overline{\delta}^c$.
\end{proof}

What Lemma \ref{lemma:comparison-fixpointFc} and Theorem \ref{thm:m-fixpointFc} imply is the following characte\-rization of $\overline{\delta}^c$.
\begin{thm}
The pseudometric  $\overline{\delta}^c$ is the least fixpoint of $\cF_c$ that is greater than the pseudometric $(x, y) \mapsto |obs(x) - obs(y)|$.
\end{thm}

\section{Corresponding real-valued logic}
\label{sec:logics}

In this section, we introduce a real-valued logic. This real-valued logic should be thought of as tests performed on the diffusion process, for instance ``what is the expected value of $obs$ after letting the process evolve for time $t$?'' and it generates a pseudometric on the state space by looking at how different the process performs on those tests starting from different positions.

In the previous section, we have defined a pseudometric that is a fixpoint of a functional. In this section, we are going to show that that fixpoint pseudometric coincides with the pseudometric generated by the real-valued logic.

\subsection{Definition of the real-valued logic}

In order to define our real-valued logic, we first introduce a grammar of expressions.  This grammar of expressions is then interpreted as a family of functions for a given diffusion. 

\paragraph*{Definition of the  logics:}
The logic is defined inductively and is denoted $\grammar$:
\begin{align*}
f \in \grammar & := q~|~ obs ~|~ \min\{ f_1, f_2 \} ~|~ 1-f ~|~ f \ominus q ~|~ \langle t \rangle f
\end{align*}
for all $f_1, f_2, f \in \grammar$, $q \in [0,1] \cap \mathbb{Q}$ and $t \in \mathbb{Q}_{ \geq 0}$.

This logic closely resembles the ones introduced for discrete-time systems by Desharnais et al.~\cite{Desharnais99b, Desharnais04} and by van Breugel et al.~\cite{vanBreugel05a}. The key difference is the term $\langle t \rangle f$ which deals with continuous-time.

\paragraph*{Interpretation of the logics: }
We fix a discount factor $0 < c< 1$. The expressions in $\grammar$ are interpreted inductively as functions $E \to [0,1]$ as follows for a state $x \in E$:
\begin{align*}
q(x) & = q, \\
obs(x) & = obs(x), \\
(\min\{ f_1, f_2 \})(x) & = \min\{ f_1(x),  f_2(x) \}, \\
(1-f)(x) & = 1 - f(x), \\
(f \ominus q)(x) & = \max \{ 0, f(x) - q \}, \\
(\langle t \rangle f)(x) &= c^t ~ \int f(y) ~ P_t(x, \dee y) = c^t~ \left(\hat{P}_t f\right)(x).
\end{align*}
Whenever we want to emphasize the fact that the expressions are interpreted for a discount factor $0<c<1$, we will write $\grammar_c$.

\begin{remark}
Let us clarify what the difference is between an expression in $\grammar$ and its
interpretation. Expressions can be thought of as the
notation $+, ^2, \times$ etc. They don't carry much meaning by
themselves but one can then interpret them for a given set: $\mathbb{R},  C_0(E)$
(continuous functions $E \to \mathbb{R}$ that vanish at infinity) for instance. Combining
notations, one can write expressions that can then be interpreted
on a given set.
\end{remark}

\paragraph*{Additional useful expressions}
From $\grammar$, we can define the following two expressions
\begin{align*}
\max\{ f_1, f_2 \} & = 1 - \min \{1-f_1, 1-f_2 \},\\
f \oplus q &  = 1 - ((1 - f) \ominus q),
\end{align*}
which are interpreted as functions $E \to [0,1]$ as:
\begin{align*}
 (\max\{ f_1, f_2 \})(x) & = \max\{ f_1(x),  f_2(x) \}, \\
 (f \oplus q)(x)  & = \min \{ 1, f(x) + q \}.
\end{align*}

\subsection{Definition of the pseudometric}

The pseudometric we derive from the logic $\grammar$ corresponds to how different the test results are when the process starts from $x$  compared to the case when it starts from $y$.

Given a fixed discount factor $0< c<1$, we can define the pseudometric $\lambda^c$:
\[ \lambda^c (x,y) = \sup_{f \in \grammar_c} (f (x) - f(y)) = \sup_{f \in \grammar_c} |f(x) - f(y)|.  \]
The latter equality holds since for every $f \in \grammar_c$, $\grammar_c$ also contains $1-f$.
%

\subsection{Comparison to the fixpoint metric}
This real-valued logic $\grammar_c$ is especially interesting as the corresponding pseudometric $\lambda^c$ matches the fixpoint pseudometric $\overline{\delta}^c$ for the functional $\cF_c$ that we defined in Section \ref{sec:def-pseudometrics-F}.
In order to show that $\lambda^c = \overline{\delta}^c$, we establish the inequalities in both directions.

\begin{restatable}{lemma}{lemmaalphaf}
\label{lemma:alpha-f}
For every $f$ in $\grammar_c$, there exists $n$ such that for every $x,y$, $ f(x) - f(y) \leq \delta^c_n(x,y)$.
\end{restatable}

This proof is done by induction on the structure of $f$.  A full version can be found in Appendix \ref{app:sec-logics}.

Since $\lambda^c (x,y) = \sup_{f \in \grammar_c} ( f(x) - f(y) )$,  we get the following corollary as an immediate consequence of Lemma \ref{lemma:alpha-f}.
\begin{cor}
\label{cor:functional-F-geq-logic}
For every discount factor $0< c< 1$,  $\overline{\delta}^c \geq \lambda^c$.
\end{cor}

We now aim at proving the reverse inequality.
We will use the following result (Lemma A.7.2) from \cite{Ash72} which is also used in~\cite{vanBreugel05a} for discrete-time processes.
\begin{lemma}
\label{lemma:magie}
Let $X$ be a compact Hausdorff space.  Let $A$ be a subset of the set of continuous functions $X \to \mathbb{R}$ such that if $f, g \in A$, then $\max\{ f,g\}$ and $\min \{f,g\}$ are also in $A$.  If a function $h$ can be approximated at each pair of points by functions in $A$, then $h$ can be approximated by functions in $A$.
\end{lemma}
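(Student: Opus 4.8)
The plan is to run the classical two-stage compactness argument underlying the lattice form of the Stone--Weierstrass theorem (the Kakutani--Krein theorem). First I would pin down the two notions of approximation: ``$h$ can be approximated at each pair of points by functions in $A$'' means that for every $\epsilon > 0$ and all $x,y \in X$ there is $g \in A$ with $|g(x) - h(x)| < \epsilon$ and $|g(y) - h(y)| < \epsilon$ (the case $y = x$ included, so $h$ is in particular approximated at each single point), and ``$h$ can be approximated by functions in $A$'' means that for every $\epsilon > 0$ there is $g \in A$ with $\sup_{z \in X} |g(z) - h(z)| \leq \epsilon$. Here $h$ is continuous, which is implicit in the statement and is used below. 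Since $A$ is closed under binary $\max$ and $\min$, it is closed under finite $\max$ and $\min$ by an easy induction. Fix $\epsilon > 0$ and assume $X \neq \emptyset$ (otherwise there is nothing to prove).

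\textbf{Stage 1 (bound above globally, below near a single point).} Fix $x \in X$. For each $y \in X$ pick $g_{x,y} \in A$ with $g_{x,y}(x) > h(x) - \epsilon$ and $g_{x,y}(y) < h(y) + \epsilon$. The set $U_{x,y} := \{ z \in X : g_{x,y}(z) < h(z) + \epsilon \}$ is open (continuity of $g_{x,y}$ and of $h$) and contains $y$; as $y$ varies these sets cover $X$, so by compactness finitely many $U_{x,y_1}, \dots, U_{x,y_n}$ suffice. Set $g_x := \min\{ g_{x,y_1}, \dots, g_{x,y_n} \} \in A$. Then $g_x(z) < h(z) + \epsilon$ for every $z \in X$ (each $z$ lies in some $U_{x,y_i}$ and $g_x \leq g_{x,y_i}$), while $g_x(x) > h(x) - \epsilon$ since each $g_{x,y_i}(x) > h(x) - \epsilon$.

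\textbf{Stage 2 (restore the lower bound globally).} For each $x \in X$ the set $V_x := \{ z \in X : g_x(z) > h(z) - \epsilon \}$ is open and contains $x$; these cover $X$, so by compactness pick a finite subcover $V_{x_1}, \dots, V_{x_m}$ and put $g := \max\{ g_{x_1}, \dots, g_{x_m} \} \in A$. For any $z \in X$: since $z \in V_{x_j}$ for some $j$ we get $g(z) \geq g_{x_j}(z) > h(z) - \epsilon$; and since each $g_{x_i}(z) < h(z) + \epsilon$ we also get $g(z) < h(z) + \epsilon$. Hence $|g(z) - h(z)| < \epsilon$ for all $z$, so $\sup_{z \in X} |g(z) - h(z)| \leq \epsilon$, as required.

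\textbf{Main point / obstacle.} The only real subtlety — and the reason compactness is invoked twice rather than once — is the asymmetry between the two bounds. A single pass yields, for each base point $x$, a function $g_x$ that is globally under $h + \epsilon$ but only locally (on $V_x$) over $h - \epsilon$; the second pass glues the $g_x$'s by a finite maximum, which restores the lower bound $h - \epsilon$ everywhere precisely because each $V_x$ is an open neighbourhood of $x$ and these cover $X$, and it does not spoil the upper bound because a finite maximum of functions each $< h + \epsilon$ is still $< h + \epsilon$. Everything else is bookkeeping: closure of $A$ under finite lattice operations, openness of the relevant sub- and super-level sets from continuity of $h$ and of the chosen functions, and nonemptiness of $X$ so that the finite subcovers exist.
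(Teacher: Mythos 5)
Your proof is correct and complete: it is the standard two-pass compactness argument for the lattice form of the Stone--Weierstrass theorem. The paper itself does not prove this lemma but imports it from Ash (Lemma A.7.2), and the argument given there is exactly the one you reconstruct, so there is nothing to compare beyond noting that your explicit remark that $h$ must be continuous (so that the sub- and super-level sets $U_{x,y}$ and $V_x$ are open) correctly identifies a hypothesis left implicit in the paper's statement and satisfied in both places the lemma is applied.
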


In order to use this lemma, we need the following lemmas:

\begin{lemma}
\label{lemma:grammar-continuous}
For every $f \in \grammar_c$, the function $x \mapsto f(x)$ is continuous.
\end{lemma}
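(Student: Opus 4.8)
The plan is to proceed by structural induction on the expression $f \in \grammar_c$, showing at each stage that the interpreted function $x \mapsto f(x)$ is continuous $E \to [0,1]$; boundedness is automatic since every interpretation takes values in $[0,1]$, so the only content is continuity. For the base cases, a constant expression $q$ interprets as a constant function, and $obs$ interprets as the function $obs$, which is continuous by the standing assumption on the observable.

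For the inductive step I would dispatch the ``propositional'' connectives first: if $g$ (resp.\ $f_1, f_2$) are continuous, then $1 - f$ interprets as $x \mapsto 1 - g(x)$, continuous as the composition of $g$ with the continuous map $t \mapsto 1 - t$; $f \ominus q$ interprets as $x \mapsto \max\{0, g(x) - q\}$, continuous as the composition of $g$ with the continuous truncation $t \mapsto \max\{0, t - q\}$; and $\min\{f_1, f_2\}$ is continuous as the pointwise minimum of two continuous functions. The derived connectives $\max\{f_1,f_2\}$ and $f \oplus q$ are then continuous as well, though they are not needed to carry the induction.

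The only case with real content is $f = \langle t \rangle g$, whose interpretation is $x \mapsto c^t \int_E g(y)\, P_t(x, \dee y)$. By the induction hypothesis $g$ is bounded and continuous on $E$. Since we work with a diffusion, the map $x \mapsto P_t(x)$ is weakly continuous, as already observed in Section \ref{sec:def-functional}: if $x_n \to x$ in $E$ then $P_t(x_n) \to P_t(x)$ weakly, hence $\int g \, \dee P_t(x_n) \to \int g \, \dee P_t(x)$ precisely because $g$ is bounded and continuous. Thus $x \mapsto \int g \, \dee P_t(x)$ is sequentially continuous, and since $E$ is metrizable (it is Polish, with metric $\Delta$), sequential continuity is continuity; multiplying by the constant $c^t$ preserves this, closing the induction.

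The main obstacle — indeed essentially the only non-routine point — is the $\langle t \rangle$ case, and specifically the fact that $\hat{P}_t$ is a priori defined only on $C_0(E)$, whereas $g$ need not vanish at infinity (e.g.\ $g = q$ a nonzero constant). The resolution is to read $\langle t \rangle g$ through the Markov kernel $P_t$ via Proposition \ref{prop:Riesz-use} rather than through $\hat{P}_t$, and then to invoke weak continuity of $x \mapsto P_t(x)$; this is exactly the place where the diffusion hypothesis (not merely being Feller–Dynkin) is used.
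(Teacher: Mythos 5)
Your proof is correct and follows essentially the same route as the paper: structural induction, with the only substantive case being $\langle t\rangle g$, handled via the weak continuity of $x \mapsto P_t(x)$ guaranteed by the diffusion hypothesis together with the boundedness and continuity of $g$ from the induction hypothesis. Your additional observation that one should read $\langle t\rangle g$ through the kernel $P_t$ of Proposition \ref{prop:Riesz-use} rather than through $\hat{P}_t$ on $C_0(E)$ is a sensible clarification but does not change the argument.
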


\begin{proof}
This is done by induction on the structure of $\Lambda_c$. The only case that is not straightforward is when $f =  \langle t \rangle g$. By induction hypothesis, $g$ is continuous.
Since the map $x \mapsto P_t(x)$ is continuous (onto the weak topology), $f = c^t \hat{P}_t g$ is continuous.
\end{proof}

\begin{restatable}{lemma}{lemmaprepmagieF}
\label{lemma:prep-magie-F}
Consider a continuous function $h: E \to [0,1]$ and two states $z, z'$ such that there exists $f$ in the logic $\grammar$ such that $|h(z) - h(z')| \leq |f(z) - f(z')|$. Then for every $\delta > 0$, there exists $g \in \grammar$  such that $|h(z) - g(z)| \leq 2 \delta$ and $|h(z') - g(z')| \leq 2 \delta$.
\end{restatable}

\begin{proof}
WLOG $h(z) \geq h(z')$ and $f(z) \geq f(z')$ (otherwise consider $1-f$ instead of $f$).
Pick $p,q,r \in \mathbb{Q} \cap [0,1]$ such that
\begin{align*}
p & \in [f(z') - \delta, f(z')],\\
q & \in [h(z) - h(z') - \delta, h(z) - h(z')],\\
r & \in [h(z'), h(z') + \delta].
\end{align*}
Define $g = ( \min \{ f \ominus p, q \} ) \oplus r$. The computations showing that the result follows can be found in the Appendix \ref{app:sec-logics}.
\end{proof}

\begin{cor}
\label{cor:prep-magie}
Consider a continuous function $h: E \to [0,1]$ such that for any two states $z, z'$ there exists $f$ in the logic $\grammar$ such that $|h(z) - h(z')| \leq |f(z) - f(z')|$. Then for every compact set $K$ in $E$, there exists a sequence $(g_n)_{n \in \mathbb{N}}$ in $\Lambda$ that approximates $h$ on~$K$.
\end{cor}

\begin{proof}
We have proven in Lemma \ref{lemma:prep-magie-F} that such a function $h$ can be approximated at pairs of states by functions in $\grammar$. Now recall that all the functions in $\grammar$ are continuous (Lemma \ref{lemma:grammar-continuous}).

We can thus apply Lemma \ref{lemma:magie} on the compact set $K$, and we get that the function $h$ can be approximated by functions in $\grammar$.
\end{proof}

\begin{theorem}
\label{thm:lambda-fixpoint-F}
The pseudometric $\lambda^c$ is a fixpoint of $\cF_c$.
\end{theorem}

\begin{proof}
We will omit writing $c$ as an index in this proof.  We already have that $\lambda \leq \cF(\lambda)$ (cf Lemma \ref{lemma:order-m-Fc}), so we only need to prove the reverse direction.

There are countably many expressions in $\grammar$ so we can number them: $\grammar = \{ f_0, f_1, ...\}$. Write $m_k (x,y) = \max_{j \leq k} |f_k(x) - f_k(y)|$. 
Since all the $f_j$ are continuous (Lemma \ref{lemma:grammar-continuous}), the map $m_k$ is also continuous. Furthermore, $m_k \leq m_{k+1}$ and for every two states $x$ and $y$, $\lim_{k \rightarrow \infty} m_k(x,y) = \lambda (x,y)$.

Using Lemma \ref{lemma:wasserstein-limit}, we know that for every states $x,y$ and time $t$, 
\[\sup_k W(m_k)(P_t(x), P_t(y)) = W(\lambda)(P_t(x), P_t(y)).\]
This implies that
\begin{align*}
\cF (\lambda) (x, y) 
&= \sup_{t \geq 0} c^t W(\lambda)(P_t(x), P_t(y))\\
& = \sup_{k} \sup_{t \geq 0} c^t W(m_k)(P_t(x), P_t(y)).
\end{align*}

It is therefore enough to show that for every $k$, every time $t \in \mathbb{Q}_{\geq 0}$ and every pair of states $x,y$, $c^t W(m_k)(P_t(x), P_t(y)) \leq \lambda (x,y)$.
There exists $h \in \cH(m_k)$ such that $W(m_k)(P_t(x), P_t(y)) = \left| \int h~ \dee P_t(x) - \int h~ \dee P_t(y)\right| .$

Since $P_t(x)$ and $P_t(y)$ are tight, for every $\epsilon > 0$, there exists a compact set $K \subset E$ such that $P_t(x, E \setminus K) \leq \epsilon / 4$ and $P_t(y, E \setminus K) \leq \epsilon / 4$.

By Corollary \ref{cor:prep-magie}, there exists $(g_n)_{n \in \mathbb{N}}$ in $\Lambda$ that pointwise converge to $h$ on $K$.  In particular, for $n$ large enough,
\begin{align*}
\left| \int_K g_n ~ \dee P_t(x) - \int_K h ~ \dee P_t(x) \right| \leq \epsilon /4,
\end{align*}
and similarly for $P_t(y)$. We get that for $n$ large enough.
\begin{align*}
& \left| \int_E g_n ~ \dee P_t(x) - \int_E h ~ \dee P_t(x) \right| \\
& \leq \left| \int_K g_n ~ \dee P_t(x) - \int_K h ~ \dee P_t(x) \right| + \int_{E \setminus K} |g_n- h| ~ \dee P_t(x) \\
& \leq \epsilon / 2,
\end{align*}
and similarly for $P_t(y)$. We can thus conclude that
\begin{align*}
& c^t W(m_k) (P_t(x), P_t(y)) = c^t \left| \int_E h ~ \dee P_t(x) - \int_E h ~ \dee P_t(y) \right|\\
& \leq c^t \left| \int_E g_n ~ \dee P_t(x) - \int_E h ~ \dee P_t(x) \right| + c^t \left| \int_E g_n ~ \dee P_t(y) - \int_E h ~ \dee P_t(y) \right|\\
& \qquad + \left| (\langle t \rangle g_n)(x) - ( \langle t \rangle g_n)(y)\right|\\
& \leq \epsilon + \lambda(x, y).
\end{align*}
Since $\epsilon$ is arbitrary,  $c^t W(m_k) (P_t(x), P_t(y)) \leq \lambda (x, y)$.

\end{proof}



As a consequence of Theorem \ref{thm:lambda-fixpoint-F} and using Lemma \ref{lemma:comparison-fixpointFc}, we get that $\overline{\delta}^c \leq \lambda^c$. Then with Corollary \ref{cor:functional-F-geq-logic}, we finally get:
\begin{theorem}
\label{thm:metric-equal-logic-functional-F}
The two pseudometrics are equal: $\overline{\delta}^c = \lambda^c$.
\end{theorem}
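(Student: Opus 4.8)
The plan is to assemble the two inequalities already in hand, since the order $\leq$ on $\cM$ is the pointwise order on $[0,1]$-valued functions and is therefore antisymmetric. Corollary \ref{cor:functional-F-geq-logic} gives $\overline{\delta}^c \geq \lambda^c$: by the structural induction of Lemma \ref{lemma:alpha-f}, each $f \in \grammar_c$ satisfies $f(x) - f(y) \leq \delta^c_n(x,y)$ for some $n$, hence $f(x) - f(y) \leq \overline{\delta}^c(x,y)$, and taking the supremum over $f \in \grammar_c$ in the definition of $\lambda^c$ yields the bound. Corollary \ref{cor:functional-F-leq-logic} gives the reverse inequality $\overline{\delta}^c \leq \lambda^c$: Theorem \ref{thm:lambda-fixpoint-F} shows $\lambda^c$ is a fixpoint of $\cF_c$, and $\lambda^c$ dominates $(x,y) \mapsto |obs(x) - obs(y)|$ since $obs \in \grammar_c$, so the hypotheses of Lemma \ref{lemma:comparison-fixpointFc} are met and $\lambda^c \geq \overline{\delta}^c$.

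Putting the two inequalities together, $\overline{\delta}^c(x,y) \leq \lambda^c(x,y) \leq \overline{\delta}^c(x,y)$ for every pair of states $x,y$, so $\overline{\delta}^c = \lambda^c$. This final assembly is immediate; there is no real obstacle at this stage.

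The genuine difficulty lies entirely in the ingredients rather than in this theorem, and specifically in the direction $\overline{\delta}^c \leq \lambda^c$, i.e.\ in proving that the logical pseudometric $\lambda^c$ is a fixpoint of $\cF_c$ (Theorem \ref{thm:lambda-fixpoint-F}). That argument is where the substantive work happens: one truncates $\grammar_c$ to the finite lattices $m_k$, uses the Wasserstein continuity Lemma \ref{lemma:wasserstein-limit} to pass to the limit in $W(m_k)$, and then invokes the Stone--Weierstrass-type approximation Lemma \ref{lemma:magie} together with continuity of all logical expressions (Lemma \ref{lemma:grammar-continuous}) and the pairwise approximation of Corollary \ref{cor:prep-magie} to realize the optimal Kantorovich potential $h \in \cH(m_k)$ as a pointwise limit of logical formulas, so that $c^t W(m_k)(P_t(x),P_t(y))$ is witnessed by expressions $\langle t\rangle g_n$ and hence bounded by $\lambda^c(x,y)$. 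By contrast, the other direction only needs the elementary induction of Lemma \ref{lemma:alpha-f}. So once those results are available, Theorem \ref{thm:metric-equal-logic-functional-F} follows in one line.
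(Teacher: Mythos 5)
Your proof is correct and matches the paper's argument exactly: the paper likewise obtains Theorem \ref{thm:metric-equal-logic-functional-F} by combining Corollary \ref{cor:functional-F-geq-logic} (via Lemma \ref{lemma:alpha-f}) with Corollary \ref{cor:functional-F-leq-logic} (via Theorem \ref{thm:lambda-fixpoint-F} and Lemma \ref{lemma:comparison-fixpointFc}), with antisymmetry of the pointwise order finishing the job. Your assessment of where the real work lies is also accurate.
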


\section{Another approach through trajectories}
\label{sec:diffusions}

In previous sections, the dynamics of a diffusion was described by the Markov kernels $P_t(x)$. However, as we have seen in Section \ref{sec:CTsystems},  the dynamics of the system can also be described by the probabilities $\mathbb{P}^x$ on the space of trajectories. This leads to another pseudometric that can similarly be defined both through a functional and through a real-valued logic. The general idea is similar to that in Sections \ref{sec:functionals-metric} and \ref{sec:logics} but the implementation requires more work as we no longer deal only with states but also with trajectories.

\subsection{Through functionals}

Similarly as in Section \ref{sec:functionals-metric}, we will define a functional $\cG$ on the lattice $\cP$ of lower semi-continuous 1-bounded pseudometrics on $E$, and, by applying $\cG$ iteratively on the lattice of continuous pseudometrics, seek the fixpoint of $\cG$.  We also need a discount factor $0 < c < 1$ in this part.

%
%


\subsubsection{Some more results on our lattices}

Recall that we have defined three lattices $\cC \subset \cP \subset \cM$ where $\cM$ is the lattice of 1-bounded pseudometrics on $E$, $\cP$ is the sublattice of lower semi-continuous pseudometrics and $\cC$ of continuous pseudometrics. 

In Section \ref{sec:functionals-metric}, we restricted the pseudometrics, which were used as cost functions, to  the lattice $\cP$ or to the lattice $\cC$. 
Here,   since we are working with trajectories, we need to define a cost function on $\Omega$ the space of trajectories. This is done through a discounted uniform pseudometric (see Definition \ref{def:discounted-uniform-metric}), so we need that the regularity conditions on a pseudometric $m$ are also passed on to $U_c(m)$. Recall that $\Omega$ is equipped with the compact-open topology which is metrized by $U_c(\Delta)$.

\begin{lemma}
\label{lemma:m-lsc-Uc(m)-lsc}
Let $m$ be a pseudometric in $\cP$.
Then, $U_c(m): \Omega\times\Omega\to\mathbb{R}$ is lower semi-continuous for every $0 < c< 1$.
\end{lemma}

\begin{proof}
For $r \in [0,1]$, let us consider the set $U_c(m)^{-1}((r, 1]) = \{ (\omega, \omega') ~|~ U_c(m)(\omega, \omega') > r \}$. We intend to show that this set is open in $\Omega \times \Omega$. 

Consider $(\omega, \omega') \in U_c(m)^{-1}((r, 1])$.  This means that 
\[\sup_t c^t m (\omega(t), \omega'(t)) > r\]
 which implies that there exists $t$ such that $m (\omega(t), \omega'(t))> r c^{-t}$.

Define the set $O = \{ (x,y) ~|~ m(x,y) >rc^{-t} \} \subset E \times E$. This set $O$ is open in $E \times E$ since the function $m$ is lower semi-continuous.  

The pair $(\omega(t), \omega'(t))$ is in the open set $O$ which means that there exists $q > 0$ \footnote{Note that this $q$ also depends on how the topology on $E \times E$ is metrized, for instance $p$-distance or max distance} such that $B \times B' \subset O$ where the two sets $B$ and $B'$ are defined as open balls in $E$:
\[ B = \{ z ~|~ \Delta (z, \omega(t)) < qc^{-t} \} \quad \text{and} \quad B'= \{ z ~|~ \Delta (z, \omega'(t)) < qc^{-t} \}. \]

Define the two sets $A$ and $A'$ as
\[ A = \{ \theta ~|~ U_c(\Delta)(\omega, \theta) < q\} \qquad \text{and} \qquad A' = \{ \theta ~|~ U_c(\Delta)(\omega', \theta) < q\}. \]
These are open balls in $\Omega$ and thus $A \times A'$ is open in $\Omega \times \Omega$. 

The set $A \times A'$ is included in $U_c(m)^{-1}((r, 1])$.  Indeed, consider $\theta \in A$ and $\theta' \in A'$. This means that $U_c(\Delta)(\omega, \theta) < q$, which implies that $c^t \Delta (\omega(t), \theta(t)) < q$, i.e. $\theta(t) \in B$. Similarly, $\theta'(t) \in B'$, which means that $(\theta(t), \theta'(t)) \in B \times B' \subset O$, which means that $c^t m(\theta(t), \theta'(t)) >r $.

Furthermore, $\omega \in A$ and $\omega' \in A'$ which means that $A \times A'$ is an open neighbourhood of $(\omega, \omega')$ in $U_c(m)^{-1}((r, 1])$ which concludes the proof that $U_c(m)^{-1}((r, 1])$ is open. 
\end{proof}

\begin{lemma}
\label{lemma:m-cont-Uc(m)-cont}
If $m \in \cC$, then $U_c(m)$ is continuous as a function $\Omega \times \Omega \to \mathbb{R}$ for every $0 < c< 1$.
\end{lemma}

\begin{proof}
It is enough to show that if $(\omega_n)_{n \in \mathbb{N}}$ is a sequence of trajectories that converges to $\omega$ (in the topology generated by $U_c(\Delta)$), then $\lim_{n \to \infty} U_c(m)(\omega, \omega_n) = 0$.
Assume it is not the case: there exists $\epsilon > 0$ and an increasing sequence of natural numbers $(n_k)_{k \in \mathbb{N}}$ such that $U_c(m) (\omega, \omega_{n_k}) \geq 3 \epsilon$. This means that for every $k$, there exists a time $t_k$ such that $c^{t_k} m (\omega(t_k), \omega_{n_k}(t_k)) \geq 2 \epsilon$. Since $m$ is 1-bounded, those times $t_k$ are bounded by $\ln (2 \epsilon)/ \ln c$.

For that reason we may assume that  the sequence $(t_k)_{k \in \mathbb{N}}$ converges to some time $t$ (otherwise there is a subsequence that converges, so we can consider that subsequence instead and the corresponding trajectories $\omega_{n_k}$). We have the following inequalities for every $k$:
\begin{align*}
2 \epsilon
& \leq c^{t_k} m(\omega (t_k), \omega_{n_k}(t_k))\\
& \leq c^{t_k} m(\omega (t_k), \omega(t)) + c^{t_k} m(\omega (t), \omega_{n_k}(t_k))\\
& \leq m(\omega (t_k), \omega(t)) +  m(\omega (t), \omega_{n_k}(t_k))
\end{align*}
The trajectories and $m$ are continuous,  so
$\lim_{k \rightarrow \infty} m(\omega (t_k), \omega(t)) =~0$, i.e. there exists $K$ such that for all $k \geq K$,  $m(\omega (t_k), \omega(t)) < \epsilon$. We get that for all $k \geq K$,  $\epsilon < m(\omega(t), \omega_{n_k}(t_k))$.

Now recall that $(\omega_n)_{n \in \mathbb{N}}$ converges to $\omega$, i.e.  $\lim_{n \rightarrow \infty} U_c(\Delta)(\omega_n, \omega) = 0$. Using the triangular inequality, we have that
\begin{align*}
\Delta (\omega_{n_k}(t_k), \omega(t))
& \leq \Delta (\omega_{n_k}(t_k), \omega(t_k)) + \Delta (\omega(t_k), \omega(t))\\
&\leq c^{t_k} U_c(\Delta) (\omega_{n_k}, \omega)  + \Delta (\omega(t_k), \omega(t))\\
&\leq \frac{1}{2\epsilon} U_c(\Delta) (\omega_{n_k}, \omega)  + \Delta (\omega(t_k), \omega(t)).
\end{align*}
The second term of the right-hand side converges to 0 as $k \rightarrow \infty$ since $\omega$ is continuous. Thus $\lim_{k  \rightarrow \infty} \Delta (\omega_{n_k}(t_k), \omega(t)) = 0$. 
This implies that  $\lim_{k  \rightarrow \infty} m (\omega_{n_k}(t_k), \omega(t)) = 0$ which directly contradicts that $\epsilon < m(\omega(t), \omega_{n_k}(t_k))$ for every $k \geq K$.
\end{proof}

\subsubsection{The family of functionals}

The (family of) functional(s) $\cF_c$ that we defined in Section \ref{sec:functionals-metric} compared the distributions $P_t(x)$ on the state space. 
The new functional that we will define here compares  the probability distributions $\mathbb{P}^x$ on the space of trajectories through transport theory.  
Again, this new functional also involves a discount factor $0< c< 1$,  and it is denoted $\cG_c$. 

Given $0 < c<1$,  we define the functional $\cG_c: \cP \to \cM$ as follows: for every pseudometric $m \in \cP$ and every two states $x,y$,
\[ \cG_c(m)(x,y) = W(U_c(m))(\mathbb{P}^x, \mathbb{P}^y).\]
This is well-defined as we have proven in Lemma \ref{lemma:m-lsc-Uc(m)-lsc} that if $m$ is lower semi-continuous, then so is $U_c(m)$.

Using the Kantorovich duality, we have:
\begin{align*}
\cG_c(m)(x,y)  & =  \min_{\gamma \in \Gamma(\mathbb{P}^x, \mathbb{P}^y)} \int U_c(m) ~ \dee \gamma\\
& =  \sup_{h \in \cH(U_c(m))} \left(\int h ~ \dee  \mathbb{P}^x - \int h ~ \dee  \mathbb{P}^y\right).
\end{align*}
Lemma \ref{lemma:optimal-transport-pseudometric} ensures that $\cG_c(m)$ is indeed in $\cM$.

The remarks (\ref{rem:subtlety} and \ref{rem:tarski}) that we made on $\cF_c$ still apply here. In particular, we will need to restrict to the lattice $\cC$ in order for $\cG_c$ to be applied iteratively.  For the same reason as before (i.e. the lattice $\cC$ is not complete), we cannot apply the Knaster-Tarski theorem so we will have to go through similar steps as for $\cF_c$.

\subsubsection{Some ordering results on our functionals}

As a direct consequence from the definition of $\cG_c$, we have that  $\cG_c$ is monotone (for any $0 < c < 1$) : if $m_1 \leq m_2$ in $\cP$, then $\cG_c(m_1) \leq \cG_c(m_2)$.\\
The proof of the two following results can be found in Appendix~\ref{app:diffusions}.
\begin{restatable}{lemma}{lemmaordermFcGc}
\label{lemma:order-m-Fc-Gc}
For every pseudometric $m$ in $\cP$, every discount factor $0 < c < 1$ and every pair of states~$x,y$,
\[ m(x,y) \leq \cF_c(m)(x,y) \leq \cG_c(m)(x,y). \]
\end{restatable}

\begin{restatable}{cor}{corfixpointGimpliesF}
\label{cor:fixpoint-G-implies-F}
A fixpoint for $\cG_c$ is also a fixpoint for $\cF_c$.
\end{restatable}

\subsubsection{When restricted to continuous pseudometrics}

As in Section \ref{sec:functionals-metric}, one does not have that $\cG_c(m) \in \cP$ when $m \in \cP$ and we have to go through the lattice of continuous pseudometrics $\cC$.

\begin{lemma}
\label{lemma:Gc-cont-is-cont}
Consider a pseudometric $m \in \cC$.  Then the topology on $E$ generated by $\cG_c(m)$ is a subtopology of the original topology $\cO$ for any $0 < c< 1$.
\end{lemma}

\begin{proof}
Using Lemma \ref{lemma:m-cont-Uc(m)-cont}, we know that $U_c(m)$ is continuous wrt the topology on $\Omega \times \Omega$ generated by $U_c(\Delta)$.

In order to show that $\cG_c(m)$ generates a topology on $E$ which is a subtopology of $\cO$, it is enough to show that for a fixed state $x \in E$, the map $y \mapsto \cG_c(m)(x,y)$ is continuous on $E$. 

Pick a sequence $(y_n)_{n \in \mathbb{N}}$ in $E$ that converges to some $y \in E$ in the topology $\cO$. Recall that $\mathbb{P}^{y_n}$ converges weakly to $\mathbb{P}^y$. Then,
\begin{align*}
 \lim_{n \rightarrow \infty} \cG_c(m)(x, y_n)
 & =  \lim_{n \rightarrow \infty} W(U_c(m))(\mathbb{P}^x, \mathbb{P}^{y_n}) \\
& = \lim_{n \rightarrow \infty }  \min \left\{ \left.  \int U_c(m) \dee \gamma ~ \right| ~ \gamma \in \Gamma (\mathbb{P}^x, \mathbb{P}^{y_n}) \right\}.
\end{align*}
Using Theorem 5.20 of \cite{Villani08}, we know that the optimal transport plan $\pi_n$ between $\mathbb{P}^x$ and $\mathbb{P}^{y_n}$ converges, when the cost function is the continuous function $U_c(m)$, up to extraction of a subsequence, to an optimal transport plan $\pi$ for $\mathbb{P}^x$ and $\mathbb{P}^y$. This means that $\cG_c(m)(x, y_n)$ converges, up to extraction of a subsequence, to $\cG_c(m)(x,y)$.

However, the whole sequence $(\cG_c(m)(x, y_n))_{n \in \mathbb{N}}$ converges to $\cG_c(m)(x,y)$ for the following reason: assume it is not the case, since $0 \leq \cG_c(m)(x, y_n) \leq 1$ for every $n \in \mathbb{N}$,  there exists another subsequence $(z_k)_{k \in \mathbb{N}}$ such that the sequence $( \cG_c(m)(x, z_k))_{k \in \mathbb{N}}$ converges to a limit $l \neq \cG_c(m)(x, y)$. 
We also have that $\lim_{k \rightarrow \infty} z_k = y$. Hence using Theorem 5.20 of \cite{Villani08}, there is a subsequence $(a_j)_{j \in \mathbb{N}}$ of $(z_k)_{k \in \mathbb{N}}$ such that $\lim_{j \rightarrow \infty}\cG_c(m)(x, a_j) = \cG_c(m)(x, y)$, which yields a contradiction.
\end{proof}

\subsubsection{Defining our family of pseudometrics}

Lemma \ref{lemma:Gc-cont-is-cont} enables us to define for each $0 < c< 1$ a sequence of pseudometrics on $E$:
\begin{align*}
\forall x, y \in E \quad d_0^c(x,y) & = |obs(x) - obs(y)|,\\
d_{n+1}^c & = \cG_c (d_n^c).
\end{align*}
And finally we can define $\overline{d}^c = \sup_{n \in \mathbb{N}} d_n^c$ (which is also a limit since the sequence is non-decreasing using Lemma \ref{lemma:order-m-Fc-Gc}).  As a direct consequence of Lemma \ref{lemma:sup-lower-semicontinuous},  the pseudometric $\overline{d}^c$ is lower semi-continuous and is thus in the lattice $\cP$ for any $0 < c < 1$.

\subsubsection{Fixpoints}

Similarly to $\cF_c$ and $\overline{\delta}^c$,  we can show that $\overline{d}^c$ is a fixpoint for $\cG_c$. The proofs for the two next results can be found in Appendix \ref{app:diffusions}. 
Since $\cG_c(\overline{d}^c)$ is defined as the optimal transport cost when the transport function is $U_c(\overline{d}^c)$,  we need to relate $U_c(\overline{d}^c)$ to $U_c(d_n^c)$ which the next lemma does.

\begin{restatable}{lemma}{lemmaUdbarissupUdn}
\label{lemma:Ud-bar-is-sup-Ud-n}
For two trajectories $\omega, \omega'$, 
\[ U_c(\overline{d}^c)(\omega, \omega') =  \lim_{n \to \infty} U_c(d_n^c) (\omega, \omega') = \sup_n U_c(d_n^c) (\omega, \omega').  \]
\end{restatable}
We can then use this lemma to prove that $\overline{d}^c$ is a fixpoint of $\cG_c$ using Sion's minimax theorem. Lemma \ref{lemma:Ud-bar-is-sup-Ud-n} is what enables the proof of Theorem \ref{thm:m-fixpointG} to be similar to that of Theorem \ref{thm:m-fixpointFc}.

\begin{restatable}{theorem}{thmmfixpointG}
\label{thm:m-fixpointG}
The pseudometric $\overline{d}^c$ is a fixpoint of $\cG_c$.
\end{restatable}
Using Corollary \ref{cor:fixpoint-G-implies-F} and Lemma \ref{lemma:comparison-fixpointFc}, we directly obtain the following:
\begin{cor}
The pseudometric $\overline{d}^c$ is a fixpoint for $\cF_c$ for every $0 < c < 1$ and hence $\overline{\delta}^c \leq \overline{d}^c$.
\end{cor}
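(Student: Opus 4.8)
The plan is to derive this corollary purely from results already in hand: first that $\overline{d}^c$ is a fixpoint of $\cG_c$ (Theorem \ref{thm:m-fixpointG}), then that any $\cG_c$-fixpoint is an $\cF_c$-fixpoint (Corollary \ref{cor:fixpoint-G-implies-F}), and finally that an $\cF_c$-fixpoint lying in $\cP$ and dominating the observable pseudometric must dominate $\overline{\delta}^c$ (Lemma \ref{lemma:comparison-fixpointFc}).

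First I would spell out why a $\cG_c$-fixpoint is automatically an $\cF_c$-fixpoint: by Lemma \ref{lemma:order-m-Fc-Gc} one has the sandwich $m \leq \cF_c(m) \leq \cG_c(m)$ for every $m \in \cP$, so if $m = \cG_c(m)$ then all three terms coincide and in particular $m = \cF_c(m)$. Applying this to $m = \overline{d}^c$, which lies in $\cP$ and satisfies $\overline{d}^c = \cG_c(\overline{d}^c)$ by Theorem \ref{thm:m-fixpointG}, shows that $\overline{d}^c$ is a fixpoint of $\cF_c$ for every $0 < c < 1$, which is the first assertion.

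Next I would check the two hypotheses of Lemma \ref{lemma:comparison-fixpointFc} for $m = \overline{d}^c$. Membership in $\cP$ is precisely the lower semi-continuity of $\overline{d}^c$ recorded just after its definition (an instance of Lemma \ref{lemma:sup-lower-semicontinuous} applied to the continuous pseudometrics $d_n^c$). The domination condition $\overline{d}^c(x,y) \geq |obs(x) - obs(y)|$ is immediate, since $d_0^c(x,y) = |obs(x) - obs(y)|$ and the sequence $(d_n^c)_{n \in \mathbb{N}}$ is non-decreasing, so $\overline{d}^c = \sup_n d_n^c \geq d_0^c$. Lemma \ref{lemma:comparison-fixpointFc} then gives $\overline{d}^c \geq \overline{\delta}^c$, completing the proof.

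I do not expect a genuine obstacle: the statement is a bookkeeping consequence of earlier lemmas. The one point worth keeping in mind is the subtlety underlined in Remark \ref{rem:tarski} — namely that one must confirm $\overline{d}^c$ really belongs to the lattice $\cP$ on which $\cF_c$ is defined (it need not be continuous, only lower semi-continuous) before invoking the fixpoint comparison — but this was already secured when $\overline{d}^c$ was introduced, so no new work is required.
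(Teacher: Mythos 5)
Your proposal is correct and takes essentially the same route as the paper: Theorem \ref{thm:m-fixpointG} combined with Corollary \ref{cor:fixpoint-G-implies-F} gives the fixpoint claim, and Lemma \ref{lemma:comparison-fixpointFc} gives the inequality $\overline{\delta}^c \leq \overline{d}^c$. The only difference is that you explicitly verify the hypotheses of the comparison lemma (membership of $\overline{d}^c$ in $\cP$ and domination of $d_0^c$), which the paper leaves implicit.
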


The following lemma is similar to Lemma \ref{lemma:comparison-fixpointFc} and the proof is essentially the same.

\begin{lemma}
\label{lemma:comparison-fixpointG}
Consider  a discount factor $0 < c < 1$ and a pseudometric $m$ in $\cP$ such that
\begin{enumerate}
\item $m$ is a fixpoint for $\cG_c$,
\item for every $x,y$, $m(x,y) \geq |obs(x) - obs(y)|$,
\end{enumerate}
then $m \geq \overline{d}^c$.
\end{lemma}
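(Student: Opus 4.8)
The plan is to follow verbatim the argument used for Lemma \ref{lemma:comparison-fixpointFc}, since the only change is replacing the functional $\cF_c$ by $\cG_c$ and the sequence $(\delta_n^c)$ by $(d_n^c)$. The strategy is to prove by induction on $n$ that $m \geq d_n^c$ for every $n \in \mathbb{N}$, and then pass to the supremum.

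For the base case, I would invoke hypothesis (2) directly: since $d_0^c(x,y) = |obs(x) - obs(y)|$ by definition of the sequence, the assumption $m(x,y) \geq |obs(x) - obs(y)|$ for all $x,y$ is exactly the statement $m \geq d_0^c$. For the inductive step, suppose $m \geq d_n^c$. Since the functional $\cG_c$ is monotone (this was recorded just before Lemma \ref{lemma:order-m-Fc-Gc}) and $m$ is a fixpoint of $\cG_c$ by hypothesis (1), we get
\[ m = \cG_c(m) \geq \cG_c(d_n^c) = d_{n+1}^c, \]
where the last equality is the recursive definition of the sequence $(d_n^c)_{n \in \mathbb{N}}$. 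This closes the induction.

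Finally, since $m \geq d_n^c$ for every $n$, taking the supremum over $n$ gives $m \geq \sup_{n \in \mathbb{N}} d_n^c = \overline{d}^c$, which is the claimed inequality. There is no real obstacle here: the argument is a routine application of monotonicity plus the fixpoint property, and both ingredients have already been established earlier in the section (monotonicity of $\cG_c$, and the well-definedness of the increasing sequence $(d_n^c)$ via Lemma \ref{lemma:Gc-cont-is-cont}). The only point worth a moment's care is that all the pseudometrics $d_n^c$ live in $\cC \subset \cP$ and $m$ lies in $\cP$, so the inequalities between them and the monotonicity of $\cG_c$ on $\cP$ are all legitimately applicable.
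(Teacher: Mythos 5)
Your proof is correct and is precisely the argument the paper intends: it explicitly omits the proof of this lemma, stating it is ``essentially the same'' as that of Lemma \ref{lemma:comparison-fixpointFc}, which is exactly the induction-plus-monotonicity argument you reproduce with $\cG_c$ and $(d_n^c)$ in place of $\cF_c$ and $(\delta_n^c)$. Nothing further is needed.
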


What this lemma shows is the following important result:
\begin{thm}
The pseudometric $\overline{d}^c$ is the least fixpoint of $\cG_c$ that is greater than the pseudometric $(x, y) \mapsto |obs(x) - obs(y)|$.
\end{thm}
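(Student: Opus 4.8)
The plan is to assemble this characterization from the two results immediately preceding it, exactly as was done for $\overline{\delta}^c$ and $\cF_c$. First, Theorem \ref{thm:m-fixpointG} already establishes that $\overline{d}^c$ is a fixpoint of $\cG_c$, and the paragraph introducing $\overline{d}^c$ records that $\overline{d}^c \in \cP$ (it is lower semi-continuous as a supremum of continuous pseudometrics, by Lemma \ref{lemma:sup-lower-semicontinuous}). Second, since $d_0^c(x,y) = |obs(x) - obs(y)|$ and the sequence $(d_n^c)_{n \in \mathbb{N}}$ is nondecreasing (Lemma \ref{lemma:order-m-Fc-Gc}), we have $\overline{d}^c = \sup_n d_n^c \geq d_0^c$, so $\overline{d}^c$ lies above the pseudometric $(x,y) \mapsto |obs(x)-obs(y)|$. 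Hence $\overline{d}^c$ is an element of the set of fixpoints of $\cG_c$ that dominate $(x,y)\mapsto|obs(x)-obs(y)|$.

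It then remains only to see that it is the \emph{least} such element, and this is precisely the content of Lemma \ref{lemma:comparison-fixpointG}: any $m \in \cP$ that is a fixpoint of $\cG_c$ with $m(x,y) \geq |obs(x)-obs(y)|$ satisfies $m \geq \overline{d}^c$. For completeness, that lemma is proved by induction on $n$ showing $m \geq d_n^c$ — the base case is the hypothesis $m \geq d_0^c$, and the inductive step uses monotonicity of $\cG_c$ together with the fixpoint property: $m = \cG_c(m) \geq \cG_c(d_n^c) = d_{n+1}^c$ — and one concludes by taking the supremum over $n$. Combining the two observations yields that $\overline{d}^c$ is the least fixpoint of $\cG_c$ greater than $(x,y)\mapsto|obs(x)-obs(y)|$, which is the statement.

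There is no genuine obstacle in this last step: the substantive work has already been discharged, in Theorem \ref{thm:m-fixpointG} (whose proof is the hard one, via compactness of $\Gamma(\mathbb{P}^x,\mathbb{P}^y)$, Sion's minimax theorem, and the identity $U_c(\overline{d})=\sup_n U_c(d_n)$ of Lemma \ref{lemma:Ud-bar-is-sup-Ud-n}) and in Lemma \ref{lemma:comparison-fixpointG}. The only point requiring care, and the reason this is phrased as a separate theorem rather than an appeal to Knaster--Tarski, is that the correct ambient lattice for the fixpoint statement is $\cP$ rather than the non-complete lattice $\cC$ in which the approximants $d_n^c$ live; this is exactly the subtlety flagged in Remark \ref{rem:tarski}, and it forces $\overline{d}^c$ to be built as the supremum of a chain and then verified to be a fixpoint a posteriori.
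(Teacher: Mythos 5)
Your proposal is correct and matches the paper's own treatment exactly: the paper states this theorem as an immediate consequence of Theorem \ref{thm:m-fixpointG} (fixpointness), the observation that $\overline{d}^c = \sup_n d_n^c \geq d_0^c$, and Lemma \ref{lemma:comparison-fixpointG} (minimality among such fixpoints). Your expanded remarks on the induction inside Lemma \ref{lemma:comparison-fixpointG} and on the lattice subtlety of Remark \ref{rem:tarski} are accurate but not part of the paper's (essentially one-line) argument for this particular theorem.
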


\subsection{A real-valued logic}
\label{sec:logicG}

Similarly to Section \ref{sec:logics}, the pseudometric $\overline{d}^c$ obtained from the functional $\cG_c$ can also be described through a real-valued logic. It is worth noting that this logic needs to handle both states and trajectories and is thus defined in two parts: $\cL_\sigma$ and $\cL_\tau$.

\subsubsection{Definition of the real-valued logic}

\paragraph*{Definition of the  logics:}
This logic is defined inductively in two parts as follows:
\begin{align*}
f \in \cL_\sigma & = q~|~ obs ~|~ 1-f ~|~ \int g,\\
g \in \cL_\tau & = f \circ ev_t ~|~  \min \{ g_1, g_2\}  ~|~ \max \{ g_1, g_2\} ~|~ g \ominus q ~|~ g \oplus q,
\end{align*}
where $q \in \mathbb{Q} \cap [0,1]$, $f \in \cL_\sigma$, $g, g_1, g_2 \in \cL_\tau$ and $t \in \mathbb{Q}_{\geq 0}$. 

\paragraph*{Interpretation of the logics: }
For a fixed discount factor $0< c< 1$, the expressions in $\cL_\sigma$ are interpreted as functions $E \to [0,1]$ and expressions in $\cL_\tau$ are interpreted as functions $\Omega \to [0,1]$. The first terms of $\cL_\sigma$ are interpreted as the corresponding ones in $\grammar$: for a state $x \in E$,
\begin{align*}
q(x) & = q, \\
obs(x) & = obs(x), \\
(1-f)(x) & = 1 - f(x), \\
\left(\int g\right)(x) & = \int g ~ \dee \mathbb{P}^x.
\end{align*}
The  expressions in  $\cL_\tau$ are interpreted for a trajectory $\omega$ as
\begin{align*}
( f \circ ev_t) (\omega) & = c^t f(\omega(t)),\\
( \min \{g_1, g_2\}) (\omega)& = \min \{g_1(\omega), g_2(\omega)\},\\
( \max \{g_1, g_2\}) (\omega)& = \max \{g_1(\omega), g_2(\omega)\},\\
(g \ominus q )(\omega) & = \max \{ 0, g(\omega) - q \}, \\
(g \oplus q )(\omega) & = \min \{ 1, g(\omega) + q \}.
\end{align*}

Whenever we want to emphasize the fact that the expressions are interpreted for a certain discount factor $0<c<1$, we will write $\cL^c_\sigma$ and $\cL^c_\tau$.

\paragraph*{Additional useful expressions}
We can define additional expressions in $\cL_\sigma$ as follows:
\begin{align*}
f \ominus q & = \int [(f \circ ev_0) \ominus q],\\
\min \{ f_1, f_2\} & = \int \min \{ f_1 \circ ev_0, f_2 \circ ev_0 \},\\
\max\{ f_1, f_2 \} & = 1 - \min \{1-f_1, 1-f_2 \},\\
f \oplus q &  = 1 - ((1 - f) \ominus q).
\end{align*}
Their interpretations as functions $E \to [0,1]$ are the same as for $\grammar$.

\subsubsection{Definition of the pseudometric}


Given a fixed discount factor $0< c<1$, we can define the pseudometric $\ell^c$:
\[\ell^c (x,y) = \sup_{f \in \cL^c_\sigma} |f(x) - f(y)| = \sup_{f \in \cL^c_\sigma} (f(x) - f(y)).\]
The last equality holds since for every $f \in \cL_\sigma^c$, $\cL_\sigma^c$ also contains $1-f$.

\subsubsection{Comparison to the fixpoint metric}

In Theorem \ref{thm:metric-equal-logic-functional-F}, we proved that the real-valued logic $\grammar$ generated a pseudometric that was equal to $\overline{\delta}^c$. Such a result also holds for $\cL^c_\sigma$ and $\overline{d}^c$, but the proof is more complex as the trajectories play a bigger role. Indeed, the difficulty is now to approximate functions in $\cH(U_c(\ell^c))$ by functions in $\cL^c_\tau$.

\begin{restatable}{lemma}{lemmalogicstcontinuous}
\label{lemma:logic-st-continuous}
For any expression $f \in \cL_\sigma$, the function $x \mapsto f(x)$ is continuous and for any expression $g \in \cL_\tau$, the function $\omega \mapsto g(\omega)$ is continuous.
\end{restatable}

The proof can be found in Appendix \ref{app:diffusions}.

\paragraph*{Approximating functions on trajectories: }
We will later on use Lemma \ref{lemma:magie} to show that a function on trajectories in $\cH(U_c(\ell^c))$ can be approximated by functions in $\cL_\tau$.  We first need to look at pairs of trajectories.

\begin{restatable}{lemma}{lemmaprepmagie}
\label{lemma:prep-magie}
Consider a function $h: \Omega \to [0,1]$ and two trajectories $\omega, \omega'$ such that there exists $f$ in the logic $\cL_\sigma$ and a time $s \in \mathbb{Q}_{\geq 0}$ such that $|h(\omega) - h(\omega')| \leq c^s|f(\omega(s)) - f(\omega'(s))|$. Then for every $\delta > 0$, there exists $g$ in the logic $\cL_\tau$ such that $|h(\omega) -g(\omega)| \leq 2 \delta$ and $|h(\omega') - g(\omega')| \leq 2 \delta$.
\end{restatable}

\begin{proof}
WLOG $h(\omega) \geq h(\omega')$ and $f(\omega(s)) \geq f(\omega'(s))$ (otherwise consider $1-f$ instead of $f$).

Pick $p,q,r \in \mathbb{Q} \cap [0,1]$ such that
\begin{align*}
p & \in [c^s f(\omega'(s)) - \delta, c^s f(\omega'(s))],\\
q & \in [(h(\omega) - h(\omega') - \delta, h(\omega) - h(\omega')],\\
r & \in [ h(\omega'), h(\omega') + \delta].
\end{align*}
Define $g = ( \min \{ (f \circ ev_s) \ominus p, q \} ) \oplus r$. The details as to why $g$ works can be found in Appendix \ref{app:diffusions}.
\end{proof}

\begin{cor}
\label{cor:approx-h}
Consider a continuous function $h: \Omega \to [0,1]$ such that for any two trajectories $\omega, \omega'$, there exists $f$ in the logic and a time $s \in \mathbb{Q}$ such that $|h(\omega) - h(\omega')| \leq c^s |f(\omega(s)) - f(\omega'(s))|$. Then, for any compact set $\Omega' \subset \Omega$ the function $h$ can be approximated by functions (that are interpretations of expressions) in $\cL_\tau$ on $\Omega'$.
\end{cor}

\begin{proof}
Lemma \ref{lemma:prep-magie} says that such a function $h$ can be approximated at pairs of trajectories by functions from $\cL_\tau$.
Since  $h$ and all functions in $\cL_\tau$ are continuous, and since $\cL_\tau$ is closed under $\min$ and $\max$, we can apply Lemma \ref{lemma:magie} to get that $h$ can be approximated by functions in $\cL_\tau$ on $\Omega'$.
\end{proof}

\paragraph*{A fixpoint of $\cG$: }
Using previously proven lemmas, we can follow a proof quite similar to the one in Section \ref{sec:logics}  to show the following.

\begin{theorem}
\label{thm:mst-fixpoint}
The pseudometric $\ell^c$ is a fixpoint of $\cG_c$.
\end{theorem}

\begin{proof}
We will omit writing the index $c$ in $\ell$ and $\cG$.

We already know that $\cG(\ell) \geq \ell$ (Lemma \ref{lemma:order-m-Fc-Gc}) and we only need to prove the other inequality. Pick two states $x,y$ and recall that
\[ \cG(\ell) (x,y) = W(U_c(\ell))(\mathbb{P}^x, \mathbb{P}^y). \]
Since for any $f \in \cL_\sigma$ the map $z \mapsto f(z)$ is continuous (Lemma \ref{lemma:logic-st-continuous}) and the trajectories are continuous, it is enough to consider rational times only and we have that
\[ U_c(\ell)(\omega, \omega') = \sup_{t \in \mathbb{Q}_{\geq 0}, f \in \cL_\sigma} c^t | f(\omega(t)) - f(\omega'(t)) |. \]
Now, both the logic $\cL_\sigma$ and the set of non-negative rationals $\mathbb{Q}_{\geq 0}$ are countable, so we can enumerate their elements as $\cL_{\sigma} = (f_k)_{k \in \mathbb{N}}$ and $\mathbb{Q}_{\geq 0} = (s_k)_{k \in \mathbb{N}}$. Define the map
\[ c_k(\omega, \omega') = \max_{i, j = 1, ..., k} c^{s_i} |f_j(\omega(s_i)) - f_j(\omega'(s_i))| .\]
Note that $\lim_{k \rightarrow \infty} c_k (\omega, \omega') = U_c(\ell)(\omega, \omega')$ for any $\omega$ and $\omega'$. By Lemma \ref{lemma:wasserstein-limit},
we get that $\lim_{k \rightarrow \infty} W(c_k) (\mathbb{P}^x, \mathbb{P}^y) = W(U_c(\ell))(\mathbb{P}^x, \mathbb{P}^y)$.
It is thus enough to study $W(c_k) (\mathbb{P}^x, \mathbb{P}^y) $. Pick $k \in \mathbb{N}$.  There exists $h$ such that $\forall \omega, \omega'~ |h(\omega) - h(\omega')| \leq c_k(\omega, \omega')$ and
\[ W(c_k) (\mathbb{P}^x, \mathbb{P}^y) =\int h~ \dee \mathbb{P}^x - \int h~ \dee \mathbb{P}^y. \]
Using Lemma \ref{lemma:logic-st-continuous}, we know that $c_k$ is continuous (as a maximum of finitely many continuous functions) which means that $h$ is also continuous.  Using the second condition about $h$ and the definition of $c_k$, we know that for every pair of trajectories $\omega, \omega'$ there exists $i$ and $j \leq k$ such that $|h(\omega) - h(\omega')| \leq c^{s_i} |f_j(\omega(s_i)) - f_j(\omega'(s_i))|$. 

Since $\mathbb{P}^x$ and $\mathbb{P}^y$ are tight ($\Omega$ is Polish),  for every $\epsilon > 0$, there exists a compact set $K \subset \Omega$ such that $\mathbb{P}^x(\Omega \setminus K) \leq \epsilon / 4$ and $\mathbb{P}^y(\Omega \setminus K) \leq \epsilon / 4$.

By Corollary \ref{cor:approx-h}, there exists $(g_n)_{n \in \mathbb{N}}$ in $\cL_\tau$ that pointwise converge to $h$ on $K$.  In particular, for $n$ large enough,
\begin{align*}
\left| \int_K g_n ~ \dee \mathbb{P}^x - \int_K h ~ \dee \mathbb{P}^x \right| \leq \epsilon /4,
\end{align*}
and similarly for $\mathbb{P}^y$. We get that for $n$ large enough.
\begin{align*}
& \left| \int_\Omega g_n ~ \dee \mathbb{P}^x  - \int_\Omega h ~ \dee \mathbb{P}^x \right| \\
& \leq \left| \int_K g_n ~ \dee \mathbb{P}^x - \int_K h ~ \dee \mathbb{P}^x \right| + \int_{\Omega \setminus K} |g_n- h| ~ \dee \mathbb{P}^x \\
& \leq \epsilon / 2,
\end{align*}
and similarly for $\mathbb{P}^y$. We can thus conclude that
\begin{align*}
& W(c_k) (\mathbb{P}^x, \mathbb{P}^y) = c^t \left| \int_\Omega h ~ \dee \mathbb{P}^x - \int_\Omega h ~ \dee \mathbb{P}^y \right|\\
& \leq \left| \int_\Omega g_n ~ \dee \mathbb{P}^x - \int_\Omega h ~ \dee \mathbb{P}^x \right| + \left| \int_\Omega g_n ~ \dee \mathbb{P}^y - \int_\Omega h ~ \dee \mathbb{P}^y \right|\\
& \qquad + \left| \int_\Omega g_n ~ \dee \mathbb{P}^x - \int_\Omega g_n ~ \dee \mathbb{P}^y\right|\\
& \leq \epsilon + \ell(x, y).
\end{align*}
Since $\epsilon$ is arbitrary,  $W(c_k) (\mathbb{P}^x, \mathbb{P}^y) \leq \ell(x, y)$.

%
\end{proof}

\paragraph*{Comparison to the fixpoint metric: }
We are finally ready to compare the metric $\ell^c$ obtained through the logic $\cL_\sigma$ and $\cL_\tau$ to the metric $\overline{d}^c$ obtained through the functional $\cG_c$.

\begin{restatable}{lemma}{lemmamnf}
\label{lemma:mn-f}
For every $f \in \cL^c_\sigma$,  there exists $n \in \mathbb{N}$ such that $f \in \cH (d^c_n)$, i.e. for every states $x,y$,
\[  |f(x) - f(y)| \leq d^c_n(x,y).  \]
For every $g \in \cL^c_\tau$,  there exists $n \in \mathbb{N}$ such that $g \in \cH(U_c(d^c_n))$, i.e. for every pair of trajectories $\omega, \omega'$,
\[  |g(\omega) - g(\omega')| \leq U_c(d^c_n)(\omega, \omega').  \]
\end{restatable}

The proof can be found in Appendix \ref{app:diffusions}.

Finally, we obtain the equality of the pseudometrics defined through the fixpoint $\cG_c$ and the logics $\cL^c_\sigma$ and $\cL^c_\tau$.

\begin{theorem}
The two pseudometrics $\ell^c$ and $ \overline{d}^c$ are equal.
\end{theorem}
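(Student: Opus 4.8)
The plan is to mirror exactly the argument used for Theorem~\ref{thm:metric-equal-logic-functional-F} (the equality $\overline{\delta}^c = \lambda^c$), establishing the two inequalities $\overline{d}^c \ge \ell^c$ and $\overline{d}^c \le \ell^c$ separately. All the substantive work has already been done in the preceding lemmas, so this proof will be short.

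First I would prove $\overline{d}^c \ge \ell^c$, the exact analogue of Corollary~\ref{cor:functional-F-geq-logic}. By Lemma~\ref{lemma:mn-f}, for each $f \in \cL^c_\sigma$ there is an $n \in \mathbb{N}$ with $|f(x)-f(y)| \le d^c_n(x,y) \le \overline{d}^c(x,y)$ for all states $x,y$. Taking the supremum over $f \in \cL^c_\sigma$ in the definition $\ell^c(x,y) = \sup_{f \in \cL^c_\sigma} |f(x)-f(y)|$ yields $\ell^c(x,y) \le \overline{d}^c(x,y)$ immediately.

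For the reverse inequality $\overline{d}^c \le \ell^c$ I would invoke Lemma~\ref{lemma:comparison-fixpointG}, which states that any pseudometric $m \in \cP$ that is a fixpoint of $\cG_c$ and dominates $(x,y)\mapsto |obs(x)-obs(y)|$ satisfies $m \ge \overline{d}^c$. So it suffices to check that $\ell^c$ meets the three hypotheses. Membership $\ell^c \in \cP$: by Lemma~\ref{lemma:logic-st-continuous} every $f \in \cL^c_\sigma$ is continuous on $E$, hence so is each $(x,y) \mapsto f(x)-f(y)$; therefore $\ell^c = \sup_{f \in \cL^c_\sigma}(f(x)-f(y))$ is lower semi-continuous by Lemma~\ref{lemma:sup-lower-semicontinuous}, and since it is clearly $1$-bounded it lies in $\cP$. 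Fixpoint property: $\ell^c = \cG_c(\ell^c)$ is precisely Theorem~\ref{thm:mst-fixpoint}. Domination: since $obs \in \cL^c_\sigma$, the definition of $\ell^c$ gives $\ell^c(x,y) \ge |obs(x)-obs(y)|$. Thus Lemma~\ref{lemma:comparison-fixpointG} applies and yields $\ell^c \ge \overline{d}^c$.

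Combining the two inequalities gives $\ell^c = \overline{d}^c$, which completes the proof. There is essentially no hard step left; the only point requiring a moment's attention is the verification that $\ell^c \in \cP$, needed so that Lemma~\ref{lemma:comparison-fixpointG} can be applied, and this is immediate from the fact that a supremum of continuous functions is lower semi-continuous (Lemma~\ref{lemma:sup-lower-semicontinuous}). In particular, as in the discussion following Corollary~\ref{cor:functional-F-leq-logic}, no completeness of the lattice $\cC$ is invoked: the chain of implications runs entirely through the ``least fixpoint above $|obs(x)-obs(y)|$'' characterisation.
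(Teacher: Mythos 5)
Your proof is correct and follows exactly the same route as the paper: the inequality $\ell^c \le \overline{d}^c$ from Lemma~\ref{lemma:mn-f}, and $\overline{d}^c \le \ell^c$ from Lemma~\ref{lemma:comparison-fixpointG} together with Theorem~\ref{thm:mst-fixpoint}. Your explicit verification that $\ell^c \in \cP$ (via Lemmas~\ref{lemma:logic-st-continuous} and~\ref{lemma:sup-lower-semicontinuous}) is a detail the paper leaves implicit, and it is a welcome addition.
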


\begin{proof}
We know from Lemma \ref{lemma:comparison-fixpointG} and Theorem \ref{thm:mst-fixpoint} that $\ell^c \geq \overline{d}^c$.

The other direction is a direct consequence of Lemma \ref{lemma:mn-f}.
\end{proof}

\section{Conclusion}
\label{sec:discussion}

In this work, we have defined two behavioural pseudometrics for diffusions: continuous-time processes with mass conserved over time and with additional regularity condition. 

The first one $\overline{\delta}^c$ is purely based on the time-indexed family of Markov kernels and is the least fixpoint of the functional $\cF_c : m \mapsto \sup_t c^t W(m)(P_t(x), P_t(y))$ that is greater than $d_0$ (defined as $d_0 (x,y) = |obs(x) - obs(y)|$). We needed to introduce a discount factor $0<c<1$ in order to have that the image of a continuous pseudometric under $\cF_c$ is also continuous. This pseudometric corresponds to the real-valued logic $\grammar$ defined as
\begin{align*}
f \in \grammar & := q~|~ obs ~|~ \min\{ f_1, f_2 \} ~|~ 1-f ~|~ f \ominus q ~|~ \langle t \rangle f
\end{align*}
for all $f_1, f_2, f \in \grammar$, $q \in [0,1] \cap \mathbb{Q}$ and $t \in \mathbb{Q}_{ \geq 0}$ and then interpreted as a real-valued function on the state space for a given diffusion.

The second pseudometric $\overline{d}^c$ is defined on the state space but its definition is based on trajectories: it is the least fixpoint of the functional $\cG_c: m \mapsto W(U_c(m))(\mathbb{P}^x, \mathbb{P}^y)$ that is greater than $d_0$.  We also needed the discount factor $0<c< 1$ here to ensure that the image of a continuous pseudometric under $\cG_c$ is also continuous. This pseudometric $\overline{d}^c$ corresponds to the real-valued logic $\cL_\sigma$ defined inductively using another logic $\cL_\tau$ as
\begin{align*}
f \in \cL_\sigma & = q~|~ obs ~|~ 1-f ~|~ \int g\\
g \in \cL_\tau & = f \circ ev_t ~|~  \min \{ g_1, g_2\}  ~|~ \max \{ g_1, g_2\} ~|~ g \ominus q ~|~ g \oplus q
\end{align*}
where $q \in \mathbb{Q} \cap [0,1]$, $f \in \cL_\sigma$, $g, g_1, g_2 \in \cL_\tau$ and $t \in \mathbb{Q}_{\geq 0}$. The expressions in $\cL_\sigma$ are interpreted as real-valued functions on the state space $E$ whereas the expressions in $\cL_\tau$ are interpreted on the space of trajectories $\Omega$. 

While we do know that $\overline{\delta}^c \leq \overline{d}^c$, it remains to see if this inequality is in fact an equality or if there exists a counter-example proving that this is in fact a strict inequality.

Both assumptions for diffusions were essential in this study and some natural additional questions arise: ``how can we relax the assumptions we made about the processes?'', i.e. can we adapt this work to non-honest processes? What about the other regularity conditions? Could we for instance examine L\'evy processes in similar contexts?

\bibliographystyle{plain}
\bibliography{../MFPS2019/main}

\appendix

\section{Proofs for Section \ref{sec:background}}
\label{sec:proof-sec-background}

  \subsection{Lower semi-continuity}

\lemmasuplowersemicontinuous*

\begin{proof}
Pick $r \in \mathbb{R}$.  Then
\begin{align*}
f^{-1}((r, + \infty))
& = \{ x ~|~ \sup_{i \in \cI} f_i(x) > r\}\\
& = \{ x ~|~ \exists i \in \cI ~ f_i(x) > r\} = \bigcup_{i \in \cI} f_i^{-1} ((r, + \infty)).
\end{align*}
Since $f_i$ is continuous, the set $f_i^{-1}((yr + \infty))$ is open and therefore $f^{-1}((r, + \infty))$ is open which concludes the proof.
\end{proof}

    \subsection{Couplings}

In order to prove Lemma \ref{lemma:couplings-compact}, we will need to prove some more results first.

\begin{prop}
\label{prop:two}
If $P$ and $Q$ are two measures on a Polish space $X$ such that for every continuous and bounded function
$f: X \to \mathbb{R}$, we have \(\int f ~\dee P = \int f~\dee Q\), then \(P = Q\).  
\end{prop}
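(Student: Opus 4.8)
The plan is to reduce the equality $P=Q$ to the equality of the two measures on closed sets, and then to invoke a standard uniqueness argument. First, applying the hypothesis to the constant function $f \equiv 1$ (which is continuous and bounded) shows $P(X) = Q(X)$; in our setting these measures are finite (indeed probability measures), and this finiteness is what makes dominated convergence available in the next step.

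Next I would fix a closed set $C \subseteq X$ and let $d$ be a metric inducing the topology of $X$. For $n \in \nats$ define
\[
f_n(x) = \max\bigl(0,\; 1 - n \cdot d(x, C)\bigr), \qquad d(x,C) := \inf_{y \in C} d(x,y).
\]
Each $f_n$ takes values in $[0,1]$ (so it is bounded) and is continuous, being $1$-Lipschitz in the $1$-Lipschitz function $d(\cdot, C)$; hence the hypothesis gives $\int f_n ~\dee P = \int f_n ~\dee Q$ for every $n$. Since $C$ is closed, $d(x,C) = 0$ if and only if $x \in C$, so $(f_n)_n$ decreases pointwise to the indicator $\mathbf{1}_C$. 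By dominated convergence (with dominating function $\mathbf{1}_X$, integrable because the measures are finite), $\int f_n ~\dee P \to P(C)$ and $\int f_n ~\dee Q \to Q(C)$, and therefore $P(C) = Q(C)$.

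Finally I would conclude with Dynkin's $\pi$-$\lambda$ theorem: the family of closed subsets of $X$ is a $\pi$-system (closed under finite intersections) that generates the Borel $\sigma$-algebra of $X$ and contains $X$ itself, so two finite measures agreeing on it agree on the whole Borel $\sigma$-algebra. Hence $P = Q$.

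The argument is essentially routine, and there is no serious obstacle; the only points that need care are (i) noting that the hypothesis forces $P$ and $Q$ to be finite so that dominated convergence applies, and (ii) using the closedness of $C$ to ensure the Lipschitz approximants $f_n$ converge to $\mathbf{1}_C$ and not merely to the indicator of its closure. No use is made of completeness or separability of $X$, so the ``Polish'' hypothesis is in fact stronger than needed — plain metrizability suffices.
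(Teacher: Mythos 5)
Your proof is correct, and it takes a recognizably different route from the paper's in its choice of tools, even though the overall skeleton (approximate indicators of a generating class by continuous bounded functions, then extend to the Borel $\sigma$-algebra) is the same. The paper works with \emph{open} sets: it invokes lower semi-continuity of $\mathds{1}_U$ together with Baire's theorem (its Theorem \ref{thm:baire}) to produce an \emph{increasing} sequence of continuous functions converging to $\mathds{1}_U$, passes to the limit by monotone convergence, and concludes via Carath\'eodory's extension theorem. You work with \emph{closed} sets and build the approximants explicitly as the Lipschitz functions $\max(0,1-n\,d(\cdot,C))$ \emph{decreasing} to $\mathds{1}_C$, pass to the limit by dominated convergence (which is where finiteness of the measures genuinely matters, a point you rightly flag and the paper leaves implicit), and conclude via the $\pi$-$\lambda$ theorem. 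Your version is more self-contained — it avoids the appeal to Baire's theorem entirely and makes the approximating sequence concrete — and your closing observation that only metrizability, not Polishness, is used is accurate for both arguments. The paper's version has the mild advantage of using only monotone convergence (so it would survive verbatim for, say, $\sigma$-finite Borel measures on open sets, modulo the uniqueness step), but in the setting actually needed here (probability measures, as in Lemma \ref{lemma:coupling-closed}) the two are interchangeable.
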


\begin{proof}{}
For any open set $U$, its indicator function $\mathds{1}_U$ is lower semicontinuous, which means that there exists an increasing sequence of continuous functions $f_n$ converging pointwise to $\mathds{1}_U$ (see Theorem \ref{thm:baire}).  Without loss of generality, we can assume that the functions $f_n$ are non negative (consider the sequence $\max \{ 0, f_n\}$ instead if they are not non-negative).  Using the monotone convergence theorem, we know that
\[ \lim_{n \to \infty} \int f_n ~ \dee P = \int \mathds{1}_U ~\dee P = P(U) \]
and similarly for $Q$.  By our hypothesis on $P$ and $Q$, we obtain that for every open set $U$, $P(U) = Q(U)$.

Since $P$ and $Q$ agree on the topology, they agree on the Borel algebra by Caratheodory's extension theorem.
\end{proof}

\begin{lemma}
\label{lemma:coupling-closed}
Given two probability measures $P$ and $Q$ on a Polish space $X$, the set of couplings $\Gamma(P, Q)$ is closed in the weak topology.
\end{lemma}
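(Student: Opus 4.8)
The plan is to realize $\Gamma(P,Q)$ as an intersection of sets that are visibly closed in the weak topology on probability measures over $X \times X$, so that closedness is immediate. Write $\pi_1,\pi_2 : X\times X \to X$ for the coordinate projections; these are continuous, so for any bounded continuous $g : X \to \mathbb{R}$ the function $g\circ\pi_1$ (resp.\ $g\circ\pi_2$) is bounded and continuous on $X\times X$. By the very definition of the weak topology, the evaluation $\gamma \mapsto \int g\circ\pi_i \,\dee\gamma$ is then continuous. Consequently, for each such $g$ the sets
\[
C_1(g) \deq \Big\{\gamma \ \big|\ \textstyle\int g\circ\pi_1\,\dee\gamma = \int g\,\dee P\Big\},
\qquad
C_2(g) \deq \Big\{\gamma \ \big|\ \textstyle\int g\circ\pi_2\,\dee\gamma = \int g\,\dee Q\Big\}
\]
are preimages of a single point under a continuous map, hence closed.

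Next I would prove the identity $\Gamma(P,Q) = \bigcap_{g} C_1(g) \cap \bigcap_{g} C_2(g)$, the intersections being over all bounded continuous $g:X\to\mathbb{R}$. The inclusion $\subseteq$ is just the change-of-variables formula: if $\gamma$ is a coupling then $\gamma(B\times X)=P(B)$ for all Borel $B$ says exactly that the pushforward $(\pi_1)_*\gamma$ equals $P$, whence $\int g\circ\pi_1\,\dee\gamma = \int g\,\dee((\pi_1)_*\gamma) = \int g\,\dee P$, and symmetrically for $\pi_2$ and $Q$. For $\supseteq$, suppose $\gamma$ lies in every $C_1(g)$ and every $C_2(g)$. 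Then $(\pi_1)_*\gamma$ and $P$ are two (finite Borel) measures on the Polish space $X$ whose integrals agree against every bounded continuous function, so $(\pi_1)_*\gamma = P$ by Proposition~\ref{prop:two}; equivalently $\gamma(B\times X) = P(B)$ for all Borel $B$. Likewise $(\pi_2)_*\gamma = Q$. Hence $\gamma \in \Gamma(P,Q)$. Being an intersection of closed sets, $\Gamma(P,Q)$ is closed.

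The one genuinely load-bearing step — and the only place the hypothesis that $X$ is Polish is used — is the $\supseteq$ inclusion: one must upgrade ``agreement against all bounded continuous test functions,'' which is all the weak topology can detect, to ``agreement as measures.'' That is exactly the content of Proposition~\ref{prop:two}. Everything else (continuity of integration maps, behaviour of pushforwards, the set-theoretic manipulation) is routine, so I do not expect any serious obstacle.

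If one prefers a sequential argument — legitimate here since $X\times X$ is Polish and so the weak topology on its probability measures is metrizable — the same computation gives it directly: given $\gamma_n\to\gamma$ weakly with $\gamma_n\in\Gamma(P,Q)$, for each bounded continuous $g$ we get $\int g\circ\pi_1\,\dee\gamma = \lim_n \int g\circ\pi_1\,\dee\gamma_n = \lim_n\int g\,\dee P = \int g\,\dee P$ (and similarly for $\pi_2$), after which Proposition~\ref{prop:two} again identifies the two marginals of $\gamma$ with $P$ and $Q$.
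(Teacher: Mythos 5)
Your proposal is correct and rests on exactly the same two ingredients as the paper's proof: continuity of $\gamma \mapsto \int g\circ\pi_i\,\dee\gamma$ under the weak topology, the pushforward/change-of-variables identity for the marginals, and Proposition~\ref{prop:two} to upgrade agreement on bounded continuous test functions to equality of measures; indeed your closing ``sequential argument'' is verbatim the paper's proof. The only (minor) difference is that your primary formulation, writing $\Gamma(P,Q)$ as an intersection of closed sets, establishes closedness directly without tacitly relying on metrizability of the weak topology to pass from sequential closedness to closedness, which the paper's sequential version does.
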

\begin{proof}
Consider a sequence $(\gamma_n)_{n \in \mathbb{N}}$ of couplings of $P$ and $Q$ weakly converging to a measure $\mu$ on $X \times X$, meaning that for every bounded continuous functions $f : X \times X \to \mathbb{R}$, $\lim_{n \to \infty} \int f ~ \dee \gamma_n = \int f~ \dee \mu$.  Note that $\mu$ is indeed a probability measure.  We only have to prove that the marginals of $\mu$ are $P$ and $Q$.  We will only do it for the first one, i.e.  showing that for every measurable set $A$, $\mu(A \times X) = P(A)$ since the case for $Q$ works the same.

Let $\pi:X\x X\to X$ be the first projection map.  The first marginal of $\gamma_n$ and $\mu$ are obtained as the pushforward measures $\pi_* \gamma_n$ and $\pi_*\mu$.  Indeed, $\pi_* \mu$ is defined for all measurable set $A \subset X$ as $\pi_* \mu (A) = \mu (\pi^{-1} (A)) = \mu(A \times X)$ (and similarly for the $\gamma_n$'s).  This means that for every $n$,  $\pi_* \gamma_n = P$.

For an arbitrary continuous bounded function $f: X \to \mathbb{R}$,  define the function $g: X \times X \to \mathbb{R}$ as $g = f \circ \pi$, i.e.  $g(x, y) = f(x)$.  The function $g$ is also continuous and bounded, so by weak convergence of the sequence $(\gamma_n)_{n \in \mathbb{N}}$ to $\mu$, we get that
\[ \lim_{n \to \infty} \int f \circ \pi ~ \dee  \gamma_n = \int f \circ \pi ~ \dee \mu. \]
Using the change of variables formula, we obtain that for any continuous bounded function $f$
\[ \int f ~ \dee P =  \lim_{n \to \infty} \int f ~ \dee  (\pi_*\gamma_n) = \int f ~ \dee  (\pi_*\mu). \]
Using  Proposition \ref{prop:two}, we get that $\pi_*\mu = P$.
\end{proof}

\begin{lemma}
\label{lemma:couplings-tight}
Consider two probability measures $P$ and $Q$ on Polish spaces $X$ and $Y$ respectively.   Then the set of couplings $\Gamma (P, Q)$ is tight: for all $\epsilon > 0$, there
exists a compact set $K$ in $X \times Y$ such that for all coupling $\gamma \in \Gamma(P, Q)$, $\gamma(K) > 1- \epsilon$
\end{lemma}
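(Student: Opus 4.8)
The plan is to deduce tightness of $\Gamma(P,Q)$ from tightness of the individual marginals, which is automatic since $P$ and $Q$ are Borel probability measures on Polish spaces (every such measure is tight, by Ulam's theorem). First I would fix $\epsilon > 0$ and, using tightness of $P$ on $X$, choose a compact set $K_X \subseteq X$ with $P(X \setminus K_X) < \epsilon/2$; similarly choose a compact $K_Y \subseteq Y$ with $Q(Y \setminus K_Y) < \epsilon/2$. Then set $K = K_X \times K_Y$, which is compact in $X \times Y$ by Tychonoff (a finite product of compacts).

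Next I would estimate $\gamma(K)$ from below for an arbitrary coupling $\gamma \in \Gamma(P,Q)$. The complement satisfies $(X \times Y) \setminus (K_X \times K_Y) \subseteq \big((X \setminus K_X) \times Y\big) \cup \big(X \times (Y \setminus K_Y)\big)$, so by subadditivity and the marginal conditions,
\[
\gamma\big((X \times Y)\setminus K\big) \le \gamma\big((X\setminus K_X)\times Y\big) + \gamma\big(X \times (Y\setminus K_Y)\big) = P(X\setminus K_X) + Q(Y\setminus K_Y) < \epsilon.
\]
Hence $\gamma(K) > 1 - \epsilon$, and this bound is uniform over all $\gamma \in \Gamma(P,Q)$ since it only used the marginal constraints, not any further property of $\gamma$.

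There is essentially no hard part here: the only thing to be careful about is invoking the fact that a Borel probability measure on a Polish space is tight (Ulam's theorem), which I would either cite or note is standard. If one wanted to avoid citing Ulam, one could instead phrase the lemma as requiring $P$ and $Q$ themselves to be tight (which, combined with Lemma~\ref{lemma:coupling-closed}, is exactly what feeds into the application of Prokhorov's theorem later to extract weakly convergent subsequences of couplings). Either way, the proof is the two short paragraphs above.
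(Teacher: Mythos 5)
Your proof is correct and follows essentially the same route as the paper's: invoke Ulam's theorem for tightness of the marginals, take the product of the two compact sets, and bound the measure of the complement via subadditivity and the marginal constraints. (If anything, your use of strict inequalities $<\epsilon/2$ matches the strict bound $\gamma(K) > 1-\epsilon$ in the statement slightly more cleanly than the paper's $\leq \epsilon/2$.)
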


\begin{proof}
First note that by Ulam's tightness theorem, the probability measures $P$ and $Q$ are tight.

Consider $\epsilon > 0$.  Since $P$ and $Q$ are tight, there
exist two compact sets $K$ and $K'$ such that $P(X \setminus K) \leq \epsilon/2$ and $Q(Y \setminus K') \leq \epsilon/2$.  Define the set $C = K \times K'$.  This set is compact as a product of two compact sets.  Furthermore, for every $\gamma \in \Gamma(P,Q)$,
\begin{align*}
 \gamma((X \times Y) \setminus C) 
& = \gamma \left( [(X \setminus K) \times Y] \cup [X \times ( Y \setminus K')] \right)\\
& \leq \gamma[(X \setminus K) \times Y] + \gamma [X \times ( Y \setminus K')]\\
& = P(X \setminus K) + Q(Y \setminus K') \leq \frac{\epsilon}{2} + \frac{\epsilon}{2} = \epsilon.
\end{align*}
This shows that the set $\Gamma(P, Q)$ is tight.
\end{proof}

We can now prove Lemma \ref{lemma:couplings-compact}. Let us start by restating it.

\lemmacouplingscompact*

\begin{proof}
Using Lemma \ref{lemma:couplings-tight}, we know that the set $\Gamma(P, Q)$ is tight.
Applying Prokhorov's theorem (see Theorem \ref{thm:prokhorov}),  we get that the set $\Gamma(P, Q)$ is precompact.

Since the set $\Gamma(P, Q)$ is also closed (see Lemma \ref{lemma:coupling-closed}), then it is compact.
\end{proof}

We have used Prokhorov's theorem in the previous proof.  Here is the version cited in \cite{Villani08}
\begin{thm}
\label{thm:prokhorov}
Given a Polish space $\cX$, a subset $\cP$ of the set of probabilities on $\cX$ is precompact for the weak topology if and only if it is tight.
\end{thm}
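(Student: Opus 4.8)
The plan is to prove the two implications separately, in both cases reducing to the situation where the ambient space is compact. For a compact metric space $K$, recall that $C(K)$ is separable, so by the Riesz representation theorem the space $\cP(K)$ of Borel probability measures on $K$ sits inside the closed unit ball of $C(K)^{*}$, which is weak-$*$ compact (Banach--Alaoglu) and metrizable; since $\cP(K)$ is a weak-$*$ closed subset, it is itself compact and metrizable in the weak topology. This is the fixed point around which both directions turn.

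For the direction ``tight $\Rightarrow$ precompact'', embed the Polish space $\cX$ homeomorphically as a subset of the Hilbert cube $K = [0,1]^{\nats}$, which is compact metrizable. Given a tight family $\cP$, push its measures forward to $\cP(K)$; take any sequence $(\mu_n)$ in $\cP$, and by the compactness of $\cP(K)$ extract a subsequence converging weakly (on $K$) to some $\mu \in \cP(K)$. Let $(K_j)_j$ be compact subsets of $\cX$ with $\mu_n(\cX \setminus K_j) \le 1/j$ for all $n$; each $K_j$ is closed in $K$, so by the portmanteau inequality for closed sets $\mu(K_j) \ge \limsup_n \mu_n(K_j) \ge 1 - 1/j$, whence $\mu$ is carried by the $\sigma$-compact set $\bigcup_j K_j \subseteq \cX$. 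One then checks that weak convergence on $K$ together with this concentration gives weak convergence in $\cP(\cX)$: a bounded continuous test function on $\cX$ can, on each $K_j$, be approximated by restrictions of continuous functions on $K$, and the tails are controlled by tightness. Finally, the weak topology on $\cP(\cX)$ is itself metrizable because $\cX$ is Polish, so sequential compactness of the closure of $\cP$ upgrades to compactness, i.e. precompactness of $\cP$.

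For the converse ``precompact $\Rightarrow$ tight'', fix a complete metric on $\cX$ and a countable dense set, and suppose for contradiction that some $\varepsilon > 0$ witnesses non-tightness. The key claim is that for each $n$ there is a \emph{finite} union $U_n$ of balls of radius $1/n$ with $\mu(U_n) > 1 - \varepsilon 2^{-n}$ for every $\mu \in \cP$; if this failed for some $n$, then for every finite union $V$ of such balls there would be $\mu_V \in \cP$ with $\mu_V(V) \le 1 - \varepsilon 2^{-n}$, and precompactness would let us pass to a weakly convergent limit $\mu$ which, by the open-set portmanteau inequality $\mu(V) \le \liminf \mu_V(V)$, would satisfy $\mu(V) \le 1 - \varepsilon 2^{-n}$ for every finite union $V$ --- impossible, since these unions exhaust $\cX$ and $\mu$ is a probability measure. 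Granting the claim, set $K = \bigcap_n \overline{U_n}$: it is closed and totally bounded, hence compact by completeness, and $\mu(\cX \setminus K) \le \sum_n \varepsilon 2^{-n} = \varepsilon$ for all $\mu \in \cP$, contradicting non-tightness.

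The main obstacle is the first (harder) implication: the compactness of $\cP(K)$ for compact $K$ is routine, but the bookkeeping needed to transfer weak convergence back and forth between the Hilbert-cube compactification and $\cX$ itself --- and to pass from sequential compactness to genuine compactness of $\overline{\cP}$ --- is where Polishness (completeness plus separability) is genuinely used and where care is required. The converse implication is comparatively soft, requiring only the portmanteau inequalities and a diagonal/total-boundedness argument.
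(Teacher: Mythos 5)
The paper does not prove this statement at all: it is Prokhorov's theorem, quoted verbatim from Villani and used as a black box, so there is nothing to compare your argument against line by line. What you have written is the standard textbook proof (essentially Billingsley's), and it is correct in outline. Two places deserve a comment. In the converse direction, the step ``$\mu(V)\le 1-\varepsilon 2^{-n}$ for \emph{every} finite union $V$'' does not follow directly from the portmanteau inequality applied to each $\mu_V$, since $\mu_V(V)\le 1-\varepsilon 2^{-n}$ only for the particular $V$ indexing $\mu_V$; you need the monotonicity trick --- fix $V_0$, note $\mu_V(V_0)\le\mu_V(V)\le 1-\varepsilon 2^{-n}$ for all $V\supseteq V_0$ along the directed family (or along an increasing sequence $V_m$ exhausting the dense centres), and only then pass to the limit. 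This is a routine fill-in but as written the inference is compressed to the point of being formally wrong. Second, your closing remark misplaces where completeness is used: the implication ``tight $\Rightarrow$ precompact'' needs only separability (your own argument shows the limit is carried by a $\sigma$-compact subset of $\cX$, so you never need the image of $\cX$ in the Hilbert cube to be Borel); completeness is genuinely needed only in the converse, where ``closed and totally bounded $\Rightarrow$ compact'' is invoked. Neither point affects the validity of the proof, and supplying a proof here goes beyond what the paper itself does.
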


   \subsection{Optimal transport theory}

\lemmaoptimaltransportpseudometric*

\begin{proof}
The first expression $W(c)(\mu, \nu) =  \min_{\gamma \in \Gamma(\mu, \nu)} \int c ~ \dee \gamma$ immediately gives us that $W(c)$ is 1-bounded (since we are working on the space of probability distributions on $\cX$) and that $W(c)$ is symmetric (by the change of variable formula with the function on $\cX \times \cX$, $s(x,y) = (y,x)$).

The second expression $W(c)(\mu, \nu) = \max_{h \in \cH(c)} \left| \int h ~ \dee \mu - \int h ~\dee \nu  \right|$ immediately gives us that $W(c)(\mu, \mu) = 0$ and for any three probability distributions $\mu, \nu, \theta$ and for any function $h \in \cH(c)$,
\begin{align*}
\left| \int h ~ \dee \mu - \int h ~\dee \theta \right|
& \leq \left| \int h ~ \dee \mu - \int h ~\dee \nu \right| + \left| \int h ~ \dee \nu - \int h ~\dee \theta \right|\\
& \leq W(c) (\mu, \nu) + W(c)(\nu, \theta).
\end{align*}
Since this holds for every $h \in \cH(c)$, we get the triangular inequality.
\end{proof}

\lemmawassersteinlimit*

\begin{proof}
For each $c_k$, the optimal transport cost $W(c_k)(P, Q)$ is attained for a coupling $\pi_k$.  Using Lemma \ref{lemma:couplings-compact}, we know that the space of couplings $\Gamma(P, Q)$ is compact. We can thus extract a subsequence that we will still denote by $(\pi_k)_{k \in \mathbb{N}}$ which converges weakly to some coupling $\pi \in \Gamma(P, Q)$. We will show that this coupling $\pi$ is in fact an optimal transference plan.

By Monotone Convergence Theorem,
\[ \int c ~ \dee \pi = \lim_{k \rightarrow \infty } \int c_k ~ \dee \pi. \]
Consider $\epsilon > 0$. There exists $k$ such that
\begin{align}
\label{eq:numberg}
\int c ~ \dee \pi & \leq \epsilon  \int c_k ~ \dee \pi
\end{align}

Since $\pi_k$ converges weakly to $\pi$ and since $c_k$ is continuous and bounded,
\[ \int c_k ~ \dee \pi = \lim_{n \rightarrow \infty} \int c_k ~ \dee \pi_n,\]
which implies that there exists $n_k \geq k$ such that
\begin{align}
\label{eq:numberh}
\int c_k ~ \dee \pi & \leq \epsilon +  \int c_k ~ \dee \pi_{n_k}.
\end{align}
Putting Equations (\ref{eq:numberg}) and (\ref{eq:numberh}) together, we get
\begin{align*}
\int c ~ \dee \pi
& \leq 2 \epsilon +  \int c_k ~ \dee \pi_{n_k}\\
&  \leq 2 \epsilon +  \int c_{n_k} ~ \dee \pi_{n_k} \quad \text{ since $(c_n)_{n \in \mathbb{N}}$ is increasing}\\
 & = 2 \epsilon + W(c_{n_k})(P, Q).
\end{align*}
This implies that
\[ \int c ~ \dee \pi \leq \lim_{k \rightarrow \infty} W(c_k)(P, Q). \]

The other inequality is trivial since $c_k \leq c$.
\end{proof}

\section{Details for Section \ref{sec:CTsystems}}
\label{app:CTsystems}

   \subsection{Feller-Dynkin processes}

Under the conditions of FD-semigroups, strong continuity is equivalent to the apparently weaker condition (see Lemma III.  6.7 in \cite{Rogers00a} for the proof):
\[ \forall f \in C_0(E)~~ \forall x\in E, ~~ \lim_{t\downarrow 0} (\hat{P}_t f) (x) = f(x)\]

The authors also offer the following useful extension (Theorem III.6.1):
\begin{thm}
\label{thm:Riesz}
A bounded linear functional $\phi$ on $C_0(E)$ may be written uniquely in the form
\[ \phi(f) = \int_E f(x)~ \mu(\dee x) \]
where $\mu$ is a signed measure on $E$ of finite total variation.
\end{thm}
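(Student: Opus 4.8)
The statement is the Riesz--Markov representation theorem for $C_0(E)$; since $E$ is locally compact Hausdorff with a countable base it is $\sigma$-compact, metrizable and Polish, so Urysohn's lemma, partitions of unity and the regularity of finite Borel measures are all available. The plan is to decompose $\phi$ as a difference of two positive bounded linear functionals, represent each of these by a finite Borel measure, and then subtract. For the decomposition, given $f \in C_0(E)$ with $f \geq 0$ set $\phi^+(f) = \sup\{\phi(g) : g \in C_0(E),\ 0 \leq g \leq f\}$; this is finite since $|\phi(g)| \leq \|\phi\|\,\|f\|_\infty$ on the relevant $g$, it is positively homogeneous, and it is additive on the positive cone (for $\phi^+(f_1+f_2) \leq \phi^+(f_1)+\phi^+(f_2)$, write $0 \leq g \leq f_1+f_2$ as $g = (g\wedge f_1) + (g - g\wedge f_1)$, the summands being bounded by $f_1$ and $f_2$). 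Extending by $\phi^+(f) := \phi^+(f^+) - \phi^+(f^-)$ gives a bounded positive linear functional, and $\phi^- := \phi^+ - \phi$ is positive because $\phi^+(f) \geq \phi(f)$ for $f \geq 0$; thus $\phi = \phi^+ - \phi^-$.

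Next, from a positive bounded functional $\psi$ I construct a measure by the usual outer-measure recipe: for open $U \subseteq E$ put $\mu^*(U) = \sup\{\psi(f) : 0 \leq f \leq 1,\ \mathrm{supp}(f) \subseteq U\}$, and for arbitrary $A$ put $\mu^*(A) = \inf\{\mu^*(U) : A \subseteq U \text{ open}\}$. The standard checks --- monotonicity, countable subadditivity via a partition of unity subordinate to a cover, and Caratheodory-measurability of every open set --- show that $\mu^*$ restricts to a Borel measure $\mu$ with $\mu(E) = \|\psi\| < \infty$. One then proves $\psi(f) = \int f\,\dee\mu$ for all $f \in C_0(E)$: by linearity and $f = f^+ - f^-$ it suffices to take $0 \leq f \leq 1$, slice $[0,1]$ finely by $0 = t_0 < \dots < t_n = 1$, squeeze $f$ between Urysohn functions attached to the open sets $\{f > t_i\}$, match the functional values against $\mu$ of these sets through the definition of $\mu^*$, and let the mesh tend to $0$.

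Applying this to $\phi^+$ and $\phi^-$ yields finite Borel measures $\mu^+, \mu^-$, and $\mu := \mu^+ - \mu^-$ is a signed measure of finite total variation with $\phi(f) = \int f\,\dee\mu$ for every $f \in C_0(E)$. Uniqueness follows from Proposition~\ref{prop:two}: if $\mu, \nu$ are two such signed measures, then $\int f\,\dee(\mu-\nu)^+ = \int f\,\dee(\mu-\nu)^-$ for all $f \in C_0(E)$; multiplying an arbitrary bounded continuous $g$ by compactly supported cutoffs increasing to $1$ extends this identity to all bounded continuous $g$, so Proposition~\ref{prop:two} gives $(\mu-\nu)^+ = (\mu-\nu)^-$, and since these are mutually singular both vanish, i.e. $\mu = \nu$.

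The technical heart, and the main obstacle, is the middle step: verifying that the outer-measure construction genuinely produces a Borel measure and that its integral reproduces $\psi$. The delicate points are countable subadditivity together with Caratheodory-measurability of open sets --- which rest on local compactness, Urysohn, and partitions of unity --- and the two-sided squeeze of $\int f\,\dee\mu$ between values of $\psi$. A softer alternative trades this for functional analysis: embed $C_0(E) \hookrightarrow C(E^\infty)$ where $E^\infty = E \cup \{\infty\}$ is the one-point compactification (a compact metric space), extend $\phi$ by Hahn--Banach to $C(E^\infty)$, invoke the Riesz theorem on the compact space, and restrict the resulting signed measure to $E$; but then one must check that this restriction is independent of the chosen extension (the ambiguity being carried by $\{\infty\}$), so the bookkeeping is roughly comparable.
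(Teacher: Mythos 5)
The paper does not actually prove this statement: it is quoted verbatim from Rogers and Williams (Theorem III.6.1, building on the Riesz representation theorem II.80.3) and used as a black box, so there is no internal proof to compare against. What you have written is the standard textbook proof of the Riesz--Markov--Kakutani theorem (Jordan decomposition of the functional into positive parts, the outer-measure construction for a positive functional, subtraction, and a uniqueness argument), and it is essentially correct; it is also compatible with the paper's later use of the result, since the measure produced by the outer-measure recipe is outer regular, which the paper relies on in the proof of Corollary~\ref{cor:markov-from-semigroup}. Two small points deserve more care than you give them. First, in the definition $\mu^*(U) = \sup\{\psi(f) : 0 \leq f \leq 1,\ \mathrm{supp}(f) \subseteq U\}$ you should restrict to $f$ with \emph{compact} support contained in $U$: the countable-subadditivity argument covers $\mathrm{supp}(f)$ by finitely many of the $U_i$ before invoking a partition of unity, and that step fails for non-compact supports; one then recovers $\psi(f) = \int f\,\dee\mu$ for all of $C_0(E)$ by density of $C_c(E)$ and boundedness of $\psi$. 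Second, the extension $\phi^+(f) := \phi^+(f^+) - \phi^+(f^-)$ needs a one-line check of well-definedness and additivity (if $f = g - h = g' - h'$ with all terms nonnegative, then $g + h' = g' + h$ and additivity on the positive cone gives the consistency), which you assert but do not carry out. Your uniqueness argument via Proposition~\ref{prop:two} is fine: $\sigma$-compactness supplies cutoffs $\chi_n \uparrow 1$ in $C_c(E)$, dominated convergence against the finite measures $(\mu-\nu)^\pm$ upgrades the identity to all bounded continuous functions, and mutual singularity of the Jordan parts forces both to vanish.
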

As stated in the Riesz representation theorem, the above measure $\mu$ is inner regular.   Theorem \ref{thm:Riesz} is also known as the Riesz-Markov-Kakutani representation theorem in some other references.

Theorem \ref{thm:Riesz} has the following corollary (Theorem III.6.2) where $b\mathcal{E}$ denotes the set of bounded, $\cE$-measurable functions $E \to \mathbb{R}$.
\begin{restatable}{corollary}{cormarkovfromsemigroup}
\label{cor:markov-from-semigroup}
Suppose that $V: C_0(E) \to b\mathcal{E}$ is a (bounded) linear operator that is sub-Markov in the sense that $0 \leq f \leq 1$ implies $0 \leq Vf \leq 1$.  Then there exists a unique sub-Markov kernel (also denoted by) $V$ on $(E, \mathcal{E})$ such that for all $f \in C_0(E)$ and $x \in E$
\[ Vf(x) = \int f(y) ~ V(x, \dee y). \]
\end{restatable}

While the proof is left as an exercise in \cite{Rogers00a}, we choose to explicitly write it down for clarity


\begin{proof}
For every $x$ in $E$, write $V_x$ for the functional $V_x(f) = Vf(x)$.  This functional is bounded and linear which enables us to use Theorem \ref{thm:Riesz}: there exists a signed measure $\mu_x$ on $E$ of finite total variation such that
\[ V_x (f) = \int_E f(y) ~\mu_x(\dee y) \]
We claim that $V: (x, B) \mapsto \mu_x(B)$ is a sub-Markov kernel, i.e.
\begin{itemize}
\item For all $x$ in $E$, $V(x, -)$ is a subprobability measure on $(E, \mathcal{E})$
\item For all $B$ in $\mathcal{E}$, $V( -, B)$ is $\mathcal{E}$-measurable.
\end{itemize}

The first condition directly follows from the definition of $V$: $V(x, -) = \mu_x$ which is a measure and furthermore $V(x, E) = \mu_x(E) = V_x (1)$ (where $1$ is the constant function over the whole space $E$ which value is $1$).  Using the hypothesis that $V$ is Markov, we get that $V 1 \leq 1$.  We have to be more careful in order to prove that $V(x, B) \geq 0$ for every measurable set $B$: this is a consequence of the regularity of the measure $\mu_x$ (see Proposition 11 of section 21.4 of \cite{Royden10}).  This shows that $\mu_x$ is a subprobability measure on $(E, \mathcal{E})$.

Now, we have to prove that for every $B \in \mathcal{E}$, $V(-, B)$ is measurable.  Recall that the set $E$ is $\sigma$-compact: there exists countably many compact sets $K_k$ such that $E = \bigcup_{k \in \mathbb{N}} K_k$.  For $n \in \mathbb{N}$, define $E_n =  \bigcup_{k = 0}^n K_k$ and $B_n = B \cap E_n$.

 For every $n \in \mathbb{N}$, there exists a sequence of  functions $(f_j^n)_{j \in \mathbb{N}} \subset C_0(E)$  that converges pointwise to $\mathds{1}_{B_n}$, i.e.  for every $x \in E$, $\mu_x(B_n) = \lim_{j \to +\infty} Vf_j^n(x)$.  Since the operator $V: C_0(E) \to b\cE$, the maps $Vf_j^n$ are measurable which means that $V(-, B_n): x \mapsto \mu_x(B_n)$ is measurable.

Since for every $x \in E$, $\mu_x(B) = \lim_{n \to +\infty} \mu_x(B_n)$, this further means that $V(-, B): x \mapsto \mu_x(B)$ is measurable.
\end{proof}

We can then use this corollary to derive Proposition \ref{prop:Riesz-use} which relates these FD-semigroups with Markov
kernels.  This allows one to see the connection with familiar
probabilistic transition systems.  

   \subsection{Diffusions}
   
\begin{lemma}
\label{lemma:COtopology-metrized}
The topology $\cT$ on $\Omega$ is metrized by $U_c(\Delta)$ (for any $0 < c< 1$).
\end{lemma}

\begin{proof}
The compact-open topology $\cT$ also corresponds to the topology of uniform convergence on every set $[0, T]$ for $T \geq 0$. Let us show that it corresponds to convergence according to $U_c(\Delta)$.

Let the sequence $(\omega_n)_{n \in \mathbb{N}}$ uniformly converge on every set $[0, T]$ for $T \geq 0$. Let us note $\omega$ for its limit. Pick $\epsilon > 0$. Since $c < 1$,  there exists some $T$ such that $c^T \leq \epsilon$. Then there exists $N$ such that
\[ \forall n \geq N ~\forall t \in [0, T] ~ \Delta (\omega_n(t), \omega(t)) \leq \epsilon.\]
Furthermore, $c^t \leq 1$ and hence 
\[ \forall n \geq N ~\forall t \in [0, T] ~ c^t \Delta (\omega_n(t), \omega(t)) \leq \epsilon.\]
For every $t \geq T$, we have $c^t \Delta (\omega_n(t), \omega(t)) \leq c^T \leq \epsilon$. We thus obtain that there exists $N$ such that
\[ \forall n \geq N ~\forall t \geq 0 ~ c^t \Delta (\omega_n(t), \omega(t)) \leq \epsilon,\]
and thus $\lim_{n \rightarrow \infty} U_c(\Delta)(\omega_n, \omega) = 0$.

For the other direction,  let the sequence $(\omega_n)_{n \in \mathbb{N}}$  converge to $\omega$ according to $U_c(\Delta)$ and let $T \geq 0$ and $\epsilon > 0$. There exists $N$ such that for all $n \geq N$, $U_c(\Delta) (\omega_n, \omega) \leq c^T \epsilon$.  This means in particular that for every time $t \geq 0$, $c^t \Delta (\omega_n(t), \omega(t)) \leq c^T \epsilon$.  We can thus restrict the times considered and get:
\[ \forall t \in [0, T] \Delta (\omega_n(t), \omega(t)) \leq c^{T - t} \epsilon \leq \epsilon. \]
The last inequality holds since $T - t \geq 0$ and $c < 1$.
\end{proof}

\begin{lemma}
\label{lemma:Omega-Polish}
The space $\Omega$ equipped with the compact-open topology is Polish.
\end{lemma}

\begin{proof}
First let us show completeness of $(\Omega, U_c(\Delta))$. Let $(\omega_n)_{n \geq 0}$ be a Cauchy sequence.  As discussed in the proof of Lemma \ref{lemma:COtopology-metrized},  $(\omega_n)_{n \geq 0}$ is Cauchy under the uniform metric over any compact interval. This means that there exists some $\tilde{\omega}: [0, \infty) \to E$ such that $(\omega_n)_{n \geq 0}$ converges uniformly to $\tilde{\omega}$ on every compact $[0, T]$. This also implies that $\omega$ is continuous on every compact $[0, T]$ and thus on $[0, \infty)$.

Now let us show separability of $\Omega$ equipped with the compact-open topology. Since for a fixed $T$, the set of continuous functions $[0, T] \to E$ is separable (it is Polish), it is enough to show that the set $\Omega_0$ is dense in $\Omega$ wrt $U_c(\Delta)$ where
\[ \Omega_0 = \{ \omega ~|~ \exists T \geq 0 ~ \forall t \geq T ~~ \omega(t) = \omega(T) \}. \]
For a trajectory $\omega \in \Omega$, define $\omega_k \in \Omega_0$ for all $k \in \mathbb{N}$ by
\[ \omega_k(t) = \begin{cases}
\omega(t) ~~\text{if } t \leq k\\
\omega(k)~~\text{if } k \leq t\\
\end{cases} \]
Now, for any $k \in \mathbb{N}$, 
\[U_c(\Delta)(\omega, \omega_k) = \sup_{t \geq k} c^t \Delta (\omega(t), \omega(k)) \leq c^k\]
since $\Delta$ is 1-bounded. Furthermore, since $c< 1$, we get that $\Omega_0$ is dense in $\Omega$.
\end{proof}

\section{Proof for Section \ref{sec:functionals-metric}}
\label{app:functionals-metric}

\lemmaordermFc*

\begin{proof}
Consider a pair of states $x,y$. Then
\begin{align*}
\cF_c(m)(x,y)
& = \sup_{t \geq 0} c^t W(m)(P_t(x), P_t(y))\\
& \geq W(m)(P_0(x), P_0(y))\\
& = \inf_{\gamma \in \Gamma(P_0(x), P_0(y))} \int m ~ \dee \gamma.
\end{align*}
Since $P_0(x)$ is the dirac distribution at $x$ and similarly for $P_0(y)$, the only coupling $\gamma$ between $P_0(x)$ and $P_0(y)$ is the product measure $P_0(x) \times P_0(y)$ and thus $W(m)(P_0(x), P_0(y)) = m(x, y)$, which concludes the proof.
\end{proof}

\section{Proofs for Section \ref{sec:logics}}
\label{app:sec-logics}

\lemmaalphaf*

\begin{proof}
This proof is done by induction on the structure of $f$:
\begin{itemize}
\item If $f = q$, then $f(x) - f(y) = 0 \leq \delta^c_0(x,y)$.
\item If $f = obs$, then $f(x) - f(y) = obs(x) - obs(y) \leq  \delta^c_0(x,y)$.
\item If $f = 1 - g$ and there exists $n$ such that for every $x,y$, $ g(x) - g(y) \leq \delta^c_n(x,y)$, then $f(x) - f(y) = g(y) - g(x) \leq \delta^c_n(y,x) =  \delta^c_n(x,y)$.
\item If $f = g \ominus q$  and there exists $n$ such that for every $x,y$, $ g(x) - g(y) \leq \delta^c_n(x,y)$,  then it is enough to study the case when $f(x) = g(x) - q \geq 0$ and $f(y) = 0 \geq g(y) -q$. In that case,
\begin{align*}
f(x) - f(y) & = g(x) - q \leq g(x) - q - (g(y) - q) = g(x) - g(y) \leq \delta_n^c(x,y)\\
f(y) - f(x) &=  q - g(x) \leq 0 \leq \delta^c_n(x,y).
\end{align*}
\item If $f = \min \{ f_1, f_2\}$ and for $i = 1,2$ there exists $n_i$ such that for every $x,y$, $ f_i(x) - f_i(y) \leq \delta^c_{n_i}(x,y)$. Then write $n = \max\{ n_1, n_2\}$. There really is only one case to consider: $f(x) = f_1(x) \leq f_2(x)$ and $f(y) = f_2(y) \leq f_1(y)$. In that case
\[ f(x) - f(y) = f_1(x) - f_2(y) \leq f_2(x) - f_2(y) \leq \delta^c_{n_2}(x,y) \leq \delta^c_n(x,y). \]
\item If $f = \langle t \rangle g$ and there exists $n$ such that for every $x,y$, $ g(x) - g(y) \leq \delta^c_n(x,y)$,  then
\begin{align*}
f(x) - f(y)
& = c^t \hat{P}_t g(x) - c^t \hat{P}_t g(y) = c^t \left( \hat{P}_t g(x) -\hat{P}_t g(y) \right)\\
& \leq c^t W(\delta_n^c)(P_t(x), P_t(y))\\
& \leq \delta_{n+1}^c (x,y).
\end{align*}
\end{itemize}
\end{proof}

\lemmaprepmagieF*

\begin{proof}
WLOG $h(z) \geq h(z')$ and $f(z) \geq f(z')$ (otherwise consider $1-f$ instead of $f$).

Pick $p,q,r \in \mathbb{Q} \cap [0,1]$ such that
\begin{align*}
p & \in [f(z') - \delta, f(z')],\\
q & \in [h(z) - h(z') - \delta, h(z) - h(z')],\\
r & \in [h(z'), h(z') + \delta].
\end{align*}
Define $g = ( \min \{ f \ominus p, q \} ) \oplus r$. Then,
\begin{align*}
(f \ominus p)(z)
& \in [f(z) - f(z'), f(z) - f(z') + \delta] ~~ \text{since } f(z) \geq f(z'), \\
(\min \{ f \ominus p, q \})(z)
& = q ~~ \text{since } q \leq h(z) - h(z') \leq f(z) - f(z'),\\
g(z) & = \min \{ 1, q + r\} \in [h(z) - \delta, h(z) + \delta)] ~~ \text{as } h(z) \leq 1,
\end{align*}
which means that $|h(z) - g(z)| \leq  \delta$.
\begin{align*}
(f \ominus p)(z')
& = \max \{ 0, f(z') - p\} \in [0, \delta],\\
(\min \{ f \ominus p, q \})(z')
& \in [0, \delta],\\
g(z')& \in [h(z'), h(z') +2 \delta],
\end{align*}
which means that $|h(z') - g(z')| \leq 2 \delta$.
\end{proof}

\section{Proofs for Section \ref{sec:diffusions}}
\label{app:diffusions}

\lemmaordermFcGc*

\begin{proof}
The first inequality corresponds to Lemma \ref{lemma:order-m-Fc}.

For the second inequality: pick $t \geq 0$ and $h: E \to [0,1] \in \cH(m)$. Define $h_t : \Omega \to [0,1]$ as $h_t(\omega) = c^t h(\omega(t))$. Then note that for any trajectories $\omega, \omega'$,
\begin{align*}
|h_t(\omega) - h_t(\omega')|
& = c^t |h(\omega(t)) - h(\omega'(t))|\\
& \leq c^t m(\omega(t), \omega'(t))\\
& \leq U_c(m) (\omega, \omega').
\end{align*}
This means that $h_t \in \cH(U_c(m))$ and thus
\begin{align*}
\cG_c(m)(x,y)
& \geq \int h_t ~ \dee \mathbb{P}^x - \int h_t ~ \dee \mathbb{P}^y\\
& = c^t \left(\int h ~ \dee P_t(x) - \int h ~ \dee P_t(y) \right)\\
& = c^t \left( \hat{P}_t h(x) - \hat{P}_t h(y) \right),
\end{align*}
which allows us to conclude since it holds for every time $t \geq 0$.
\end{proof}

\corfixpointGimpliesF*

\begin{proof}
For a metric $m \in \cP$ that is a fixpoint of $\cG_c$ and two states $x$ and $y$:
\[ m(x, y) \leq \cF_c(m) (x,y) \leq \cG_c(m)(x,y) = m(x,y), \]
which means that $m = \cF_c(m)$.
\end{proof}

\lemmaUdbarissupUdn*

\begin{proof}
We will omit writing $c$ as an index for the pseudometrics $d_n$ and $\overline{d}$ throughout this proof. 

First note that $U_c(d_{n+1}) \geq U_c(d_n)$ since $U_c$ is monotone which shows the second equality.
Besides,
\begin{align*}
U_c(\overline{d})(\omega, \omega')
&= \sup_{t \geq 0} c^t \overline{d} (\omega(t), \omega'(t)) \\
&= \sup_{t \geq 0} c^t \sup_n d_n (\omega(t), \omega'(t)) \\
& = \sup_n \sup_{t \geq 0}  c^t d_n (\omega(t), \omega'(t)) \\
& = \sup_n U_c(d_n) (\omega, \omega').
\end{align*}
\end{proof}

\thmmfixpointG*

\begin{proof}
We will omit writing $c$ as an index for the pseudometrics $d_n$ and $\overline{d}$ throughout this proof.  Fix two states $x,y$.

The space of finite measures on $\Omega \times \Omega$ is a linear topological space.
Using Lemma \ref{lemma:couplings-compact}, we know that the set of couplings $\Gamma (\mathbb{P}^x, \mathbb{P}^y)$ is a compact subset.  Besides, it is also convex.

The space of bounded pseudometrics on $E$ is also a linear topological space.  We have defined a sequence $d_0, d_1, ...$ on that space.  We define $Y$ to be the set of linear combinations of pseudometrics $\sum_{n \in \mathbb{N}} a_n d_n$ such that for every $n$, $a_n \geq 0$ (and finitely many are non-zero) and $\sum_{n \in \mathbb{N}} a_n = 1$.  This set $Y$ is convex.

Define the function
\begin{align*}
\Xi : \Gamma(\mathbb{P}^x, \mathbb{P}^y) \times Y & \to [0,1]\\
(\gamma, m) & \mapsto\int U_c(m) ~ \dee \gamma.
\end{align*}

For $\gamma \in \Gamma(\mathbb{P}^x, \mathbb{P}^y)$, the map $\Xi(\gamma, \cdot)$ is continuous by dominated convergence theorem and it is monotone and hence quasiconcave.  For a given $m \in Y$, by definition of the Lebesgue integral, $\Xi(\cdot, m)$ is continuous and linear.

We can therefore apply Sion's minimax theorem:
\begin{align}
\label{eq:minimaxG}
\min_{\gamma \in \Gamma(\mathbb{P}^x, \mathbb{P}^y)} \sup_{m \in Y} \int U_c(m) ~\dee\gamma =  \sup_{m \in Y} \min_{\gamma \in \Gamma(\mathbb{P}^x, \mathbb{P}^y)}  \int U_c(m) ~ \dee \gamma.
\end{align}

Similarly to what was shown in the proof of Theorem \ref{thm:m-fixpointFc}, for an arbitrary functional $\Psi$ on $Y$ such that $m \leq m' \Rightarrow \Psi(m) \leq \Psi(m')$, 
\begin{align}
\label{eq:numberf}
\sup_{m \in Y} \Psi(m) &= \sup_{n \in \mathbb{N}} \Psi (d_n).
\end{align}
The right-hand side of Equation (\ref{eq:minimaxG}) is
\begin{align*}
& \sup_{m \in Y} \min_{\gamma \in \Gamma(\mathbb{P}^x, \mathbb{P}^y)}  \int U_c(m) ~ \dee \gamma = \sup_{m \in Y} \cG_c(m)(x,y)\\
& = \sup_{n \in \mathbb{N}} \cG_c(d_n)(x,y) \quad \text{(previous result in Equation (\ref{eq:numberf}))}\\
& = \sup_{n \in \mathbb{N}} d_{n+1}(x,y) \quad \text{(definition of } (d_n)_{n \in \mathbb{N}})\\
& =  \overline{d}(x,y).
\end{align*}

The left-hand side of Equation (\ref{eq:minimaxG}) is
\begin{align*}
& \min_{\gamma \in \Gamma(\mathbb{P}^x, \mathbb{P}^y)} \sup_{m \in Y} \int U_c(m) ~\dee \gamma\\
& = \min_{\gamma \in \Gamma(\mathbb{P}^x, \mathbb{P}^y)} \sup_{n \in \mathbb{N}} \int U_c(d_n) ~ \dee \gamma \quad \text{(previous result in Equation (\ref{eq:numberf}))}\\
& = \min_{\gamma \in \Gamma(\mathbb{P}^x, \mathbb{P}^y)}  \int \sup_{n \in \mathbb{N}} U_c(d_n) ~ \dee \gamma \quad \text{(dominated convergence theorem)}\\
& = \min_{\gamma \in \Gamma(\mathbb{P}^x, \mathbb{P}^y)}  \int U_c( \sup_{n \in \mathbb{N}} d_n) ~ \dee \gamma ~~\text{(Lemma \ref{lemma:Ud-bar-is-sup-Ud-n})}\\
& = \min_{\gamma \in \Gamma(\mathbb{P}^x, \mathbb{P}^y)}  \int U_c( \overline{d}) ~ \dee \gamma  = \cG_c(\overline{d})(x,y).
\end{align*}

This concludes the proof since we get that $\cG_c(\overline{d})(x,y) = \overline{d}(x,y)$ for every two states $x$ and $y$.
\end{proof}

\lemmalogicstcontinuous*

\begin{proof}
This is done by induction on the structure of $f$ and $g$. There are two cases that are not straightforward.
\begin{itemize}
\item First, if $f = \int g$ with $g \in \cL_\tau$, then, since $x \mapsto \mathbb{P}^x$ is continuous and $g$ is continuous and bounded in $\Omega$, we get that $x \mapsto \int g ~ \dee \mathbb{P}^x$ is continuous in $E$.
\item Second, if $g =  f \circ ev_t$ with $f \in \cL_\sigma$ and $t \in \mathbb{Q}_{\geq 0}$, then consider $(\omega_n)_{n \in \mathbb{N}}$ that converges to $\omega$ according to $U_c(\Delta)$:
\[ \lim_{n \rightarrow \infty}U_c(\Delta)(\omega_n, \omega) = 0\]
i.e. $  \lim_{n \rightarrow \infty} \sup_{s \geq 0} c^s \Delta (\omega_n (s), \omega(s)) = 0$. In particular, this entails that for our particular time $t$,
\[ \lim_{n \rightarrow \infty} \Delta (\omega_n (t), \omega(t)) = 0 .\]
And since $f$ is continuous by induction hypothesis,
\[ |g (\omega_n) - g(\omega)| = c^t |f (\omega_n(t)) - f (\omega(t))| \underset{n \rightarrow\infty}{\longrightarrow} 0. \]
\end{itemize}
\end{proof}

\lemmaprepmagie*

\begin{proof}
WLOG $h(\omega) \geq h(\omega')$ and $f(\omega(s)) \geq f(\omega'(s))$ (otherwise consider $1-f$ instead of $f$).

Pick $p,q,r \in \mathbb{Q} \cap [0,1]$ such that
\begin{align*}
p & \in [c^s f(\omega'(s)) - \delta, c^s f(\omega'(s))],\\
q & \in [(h(\omega) - h(\omega') - \delta, h(\omega) - h(\omega')],\\
r & \in [ h(\omega'), h(\omega') + \delta].
\end{align*}
Define $g = ( \min \{ (f \circ ev_s) \ominus p, q \} ) \oplus r$. Then,
\begin{align*}
(f \circ ev_s)(\omega) & = c^s f(\omega(s)),\\
((f \circ ev_s) \ominus p)(\omega) & = \max \{ 0, c^s f(\omega(s)) - p\},
\end{align*}
i.e.  $((f \circ ev_s) \ominus p)(\omega) \in [c^s [f(\omega(s)) - f(\omega'(s))], c^s [f(\omega(s)) - f(\omega'(s))]+ \delta] $ since  $f(\omega(s)) \geq f(\omega'(s))$.
\[
(\min \{ (f \circ ev_s) \ominus p, q \})(\omega) = q\]
since $q \leq  h(\omega) - h(\omega')\leq c^s [f(\omega(s)) - f(\omega'(s))]$.
\begin{align*}
q + r& \in [h(\omega) - \delta, h(\omega) + \delta],\\
g(\omega) & = \min \{ 1, q + r\} \in [h(\omega) - \delta,  h(\omega) + \delta] ~~ \text{as } h(\omega) \leq 1.
\end{align*}
Meanwhile,
\begin{align*}
(f \circ ev_s)(\omega') & = c^s f(\omega'(s)),\\
((f \circ ev_s) \ominus p)(\omega')
& = \max \{ 0, c^s f(\omega'(s)) - p\} \in [0, \delta],\\
(\min \{ (f \circ ev_s) \ominus p, q \})(\omega')
& \in [0, \delta],\\
g(\omega') & = \min \{ 1,  (\min \{ (f \circ ev_s) \ominus p, q \})(\omega') + r\}\\
& \in [h(\omega'), h(\omega') + 2 \delta].
\end{align*}
\end{proof}

\lemmamnf*

\begin{proof}
The proof is done by induction on the structure of the logic $\cL_\sigma$ and $\cL_\tau$ and quite similar to that of Lemma \ref{lemma:alpha-f}. We will only mention the different cases here.
\begin{itemize}
\item If $f = \int g$ with $g \in \cH(U_c(d^c_n))$, then
\begin{align*}
|f(x) - f(y)|
& = \left| \int g ~ \dee  \mathbb{P}^x - \int g~ \dee \mathbb{P}^y \right|\\
& \leq W(U_c(d^c_n)) (\mathbb{P}^x, \mathbb{P}^y)\\
& = \cG_c (d^c_n) (x, y) = d_{n+1}^c(x,y).
\end{align*}
\item If $g = \min \{ g_1, g_2\}$ with $g_i \in \cH(U_c(d_{n_i}^c))$, then for any two trajectories $\omega, \omega'$, there really is only one case to consider: $g_1(\omega) \leq g_2 (\omega)$ and $g_2(\omega') \leq g_1 (\omega')$. In that case,
\begin{align*}
|g(\omega) - g(\omega')|
& = |g_1(\omega) - g_2(\omega')|\\
& = \max \{g_1(\omega) - g_2(\omega'), g_2(\omega') - g_1(\omega) \}\\
&  \leq \max \{g_2(\omega) - g_2(\omega'), g_1(\omega') - g_1(\omega) \}\\
& \leq \max \{ U_c(d_{n_2}^c)(\omega, \omega'), U_c(d_{n_1}^c)(\omega, \omega')\}\\
& \leq U_c(d_{n}^c)(\omega, \omega')
\end{align*}
with $n = \max \{n_1, n_2\}$, which means that $g \in \cH(U_c(d_{n}^c))$.
\item If $g = f \circ ev_t$ with $f \in \cH(d_n^c)$, then for any two trajectories $\omega, \omega'$,
\begin{align*}
|g(\omega) - g(\omega')|
& =  \left|c^t f(\omega(t)) - c^t f(\omega'(t))\right|\\
& \leq c^t d_n^c (\omega(t), \omega'(t))\\
& \leq U_c(d_n^c)(\omega, \omega').
\end{align*}
\end{itemize}
\end{proof}

\end{document}